\newtheorem{mydef}{Definition}
\newtheorem{mythe}{Theorem}
\newtheorem{mylem}{Lemma}
\newtheorem{mypro}{Problem}
\newtheorem{mycor}{Corollary}
\begin{document}

\preprint{APS/123-QED}

\title{An efficient quantum algorithm for independent component analysis}

\author{Xiao-Fan Xu}
\affiliation{CAS Key Laboratory of Quantum Information, University of Science and Technology of China, Hefei 230026, China}
\affiliation{CAS Center for Excellence in Quantum Information and Quantum Physics, University of Science and Technology of China, Hefei 230026, China}
\affiliation{Hefei National Laboratory, University of Science and Technology of China, Hefei 230088, China}
\author{Cheng Xue}
\thanks{xcheng@iai.ustc.edu.cn}
\affiliation{Institute of Artificial Intelligence, Hefei Comprehensive National Science Center, Hefei, Anhui, 230026, P. R. China}
\author{Zhao-Yun Chen}
\affiliation{Institute of Artificial Intelligence, Hefei Comprehensive National Science Center, Hefei, Anhui, 230026, P. R. China}
\author{Yu-Chun Wu}
\thanks{wuyuchun@ustc.edu.cn}
\affiliation{CAS Key Laboratory of Quantum Information, University of Science and Technology of China, Hefei 230026, China}
\affiliation{CAS Center for Excellence in Quantum Information and Quantum Physics, University of Science and Technology of China, Hefei 230026, China}
\affiliation{Hefei National Laboratory, University of Science and Technology of China, Hefei 230088, China}
\affiliation{Institute of Artificial Intelligence, Hefei Comprehensive National Science Center, Hefei, Anhui, 230026, P. R. China}
\author{Guo-Ping Guo}
\affiliation{CAS Key Laboratory of Quantum Information, University of Science and Technology of China, Hefei 230026, China}
\affiliation{CAS Center for Excellence in Quantum Information and Quantum Physics, University of Science and Technology of China, Hefei 230026, China}
\affiliation{Hefei National Laboratory, University of Science and Technology of China, Hefei 230088, China}
\affiliation{Institute of Artificial Intelligence, Hefei Comprehensive National Science Center, Hefei, Anhui, 230026, P. R. China}
\affiliation{Origin Quantum Computing Company Limited, Hefei, Anhui, 230026, P. R. China}

\begin{abstract}
Independent component analysis (ICA) is a fundamental data processing technique to decompose the captured signals into as independent as possible components. Computing the contrast function, which serves as a measure of independence of signals, is vital in the separation process using ICA. 
This paper presents a quantum ICA algorithm which focuses on computing a specified contrast function on a quantum computer. Using the quantum acceleration in matrix operations, we efficiently deal with Gram matrices and estimate the contrast function with the complexity of $O(\epsilon_1^{-2}\mbox{poly}\log(N/\epsilon_1))$. This estimation subprogram, combined with the classical optimization framework, enables our quantum ICA algorithm, which exponentially reduces the complexity dependence on the data scale compared with classical algorithms. The outperformance is further supported by numerical experiments, while a source separation of a transcriptomic dataset is shown as an example of application.



\end{abstract}
\maketitle
\section{\label{sec1.0}Introduction}

Independent component analysis (ICA) proposed during the 1990s \cite{BSS1991, ICA1992, ICA1994} has nowadays become one of the fundamental techniques for data processing. Belonging to matrix factorization methods including the prominent principal component analysis (PCA), ICA manages to decompose data points into independent ingredients based mainly on statistical properties. 
The main motivation for the development of ICA techniques is at first the blind source separation (BSS) problem aimed at recovering latent variables from the observations. Researchers from fields including acoustics \cite{ICAAppAcou1995, IcaAco2017}, telecommunication \cite{ICAAppTele}, image processing \cite{ICAAppImage1995, ICAFace2002}, biomedicine \cite{ICABio2004, ICAAppBio, ICABio2014, BioData2017} started to introduce and address the BSS problem in a variety of contexts.


With the cocktail-party problem \cite{ICAsum2000}, it's convenient for one to intuitively understand what a BSS problem is like. Imagine that two people are speaking simultaneously around you. Amplitudes of speech signals from these speakers are denoted by $s_1(t)$ and $s_2(t)$, where $t$ stands for time. You have two devices for recording. The recorded signal are then denoted by $x_1(t)$ and $x_2(t)$ which are naturally seen as a weighted sum of $s_1(t)$ and $s_2(t)$, written as
\begin{align}
    x_1(t)=a_{11}s_1+a_{12}s_2,\\
    x_2(t)=a_{21}s_1+a_{22}s_2.
\end{align}
The BSS problem in this case is to recover the coefficients $a_{ij}$, and $s_i(t)$ as well, from only the observations $x_i(t)$.

As the most widely used method for the BSS problem, ICA won great success since it is realistic for the basic assumption that the underlying processes are mutually independent when they correspond to distinct physical processes. Under this assumption, observable random variables $\bm{x}=(\bm{x}_1, \bm{x}_2,\dots)^T$ are arguably mixtures of some statistically independent "latent variables" $\bm{s}=(\bm{s}_1, \bm{s}_2,\dots)^T$. The linear mixture model is the most general choice which reads $\bm{x}=A\bm{s}$ with a coefficient matrix $A$. Algorithms that estimate $A$ as well as $\bm{s}$ came up successively mainly based on mutual information \cite{ICA1994, FastICA1997, FastICA1999, ICAMutual1999}, soon followed by algorithms involving more complex models such as linear convolutive mixtures \cite{ICAConv2003} or nonlinear models \cite{ICANonlin1999}.

An ICA algorithm typically proposes a contrast function that reflects the independence of a set of random variables, with the argument being their samples. The goal of the algorithm is to search for the linear mixtures of the samples that correspond to the extremum of the contrast function, perhaps by iteration or optimization. The scaling of any classical ICA algorithm is no less than the relevant problem size, which if we denote the sample size by $N$, can be referred to as $\Omega(N)$. 

As restricted computing resources inevitably fall short of demands suggested by practical applications with large amounts of data, it is natural to turn to the prospective quantum computation which has shown a great advantage over the classical counterpart \cite{HHL2009,VQE2017,HamsimQSP2017,QuanWalk2009,Shor1999,Quantum2002,Grov1997}. Furthermore, considering the strong correlation between ICA and PCA which can be exponentially accelerated by quantum PCA (qPCA) algorithm \cite{QPCA2014}, this kind of quantum algorithm promisingly serves as a subroutine for ICA to leverage this acceleration.

Contrast functions based upon statistics including kurtosis and mutual information \cite{ICA2002,ICAbook2010} can be directly estimated using existing quantum estimators with a certain speedup \cite{frame1998,EntropyEst2021}. However, kurtosis is in general not robust, and even an optimal entropy estimator cannot provide satisfactory quantum speedup. We simply give relevant results in Appendix.~\ref{app1}.

The Kernel ICA algorithm \cite{KICA2002} owns a contrast function defined by the principal components of the Gram matrices constructed by samples. An adapted version of this function can be efficiently estimated with our quantum algorithm introduced in this paper.
With given quantum access to the samples, 
our algorithm starts by encoding the Gram matrices onto amplitudes of a quantum state. By the parallel execution of two eigenvalue estimation circuits, the principal components of Gram matrices are extracted, measured, and used for the computation of the contrast function. As this step is the main bottleneck of the computation, it provides a critical speedup for the overall KICA. For sample size $N$, the complexity of performing ICA with our method scales as $O(\mbox{poly}\log(N))$, exponentially faster than its classical counterpart. 

The adaptation of the contrast function is to circumvent restrictions imposed by the quantum measurement. The problem may arise that the adapted function fails to give the correct independence of its input so that even when it approaches the extremum, separated results are not independent at all. However, we numerically verify its feasibility for source separation (including comparing the adapted and the original contrast function). We also give a rigorous error analysis for the output and explain that a linear separation of non-Gaussian variables is feasible. Numerical experiments are presented to verify the ability of separation and error performance of our algorithm. Furthermore, we numerically simulate the source separation process on a transcriptomic dataset as an illustration of applications.

The remainder of this paper is organized as follows: Section~\ref{sec2.0} lists the relevant conceptions of ICA and introduces the KICA algorithm. We describe our method adapted from KICA and present a theoretical analysis in Section~\ref{sec3.0}. 
Numerical experiments of our algorithm are shown in Section~\ref{sec4.0}. In Section~\ref{sec5.0} we give a conclusion and further discussion.

\section{\label{sec2.0}Independent component analysis: conceptions and algorithm}
In this section, we briefly recall conceptions of ICA and basic preprocessing techniques. Then we introduce the KICA algorithm as a background of our work.
\subsection{\label{sec2.1}ICA}
We take the linear ICA model for a rigorous definition of ICA \cite{ICAsum2000}. 
\begin{mydef}[ICA]
    Assume that $m$ observable random variables $\bm{x}_1,\bm{x}_2,\dots,\bm{x}_m$ can be modeled as linear mixtures of $m$ independent components (ICs) with real coefficients $A_{ij}$:
\begin{equation}
    \label{eq3}
    \bm{x}_j=A_{j1}\bm{s}_1+A_{j2}\bm{s}_2+\cdots+A_{jm}\bm{s}_m,
\end{equation}
where "IC" means these unknown $s_i$ are statistically independent. We use the notation that $\bm{x}=(\bm{x}_1,\bm{x}_2,\dots,\bm{x}_m)^T$, $\bm{s}=(\bm{s}_1,\bm{s}_2,\dots,\bm{s}_m)^T$ and matrix $\bm{A}$ with entries $A_{ij}$. ICA is to estimate $A$ together with ICs $\bm{s}$ with only samples of $\bm{x}$.
\end{mydef}
To avoid ambiguity, we denote the $j$-th of all $N$ samples of $\bm{x}_i$ by $x_{ij}$, and define vectors $x_i=(x_{i1},x_{i2},\dots,x_{iN})^T$ and matrix $X=(x_1,x_2,\dots,x_m)^T$, similarly for $s_{ij}, s_i, S$. Bold used or not is to distinguish between variables and samples.

To assess the independence of a set of random variables $\bm{x}_i$ given samples $X$, the so-called contrast function was put forward \cite{ICAbook2010}. In this paper, a contrast function is denoted by $J(X)$, with input samples $X$ and output reflecting the independence of these variables. Choices of contrast functions vary as different ICA algorithms are considered \cite{FastICA1997, ICAMutual1999, KICA2002}. A contrast function is kind of a separation criterion that determines how independent we perceive $\bm{x}_i$ to be. Given a linear transformation $\bm{W}$, $\bm{x}^{\prime}=\bm{Wx}$ is considered as more independent if $J(\bm{W}X)\leq J(X)$. It is naturally expected that the contrast function reaches its extremum when $\bm{W}=\bm{A}^{-1}$ so that a solution to ICA is obtained. A basic process of an ICA algorithm is to search for the optimal $\bm{W}$ according to $J(\bm{W}X)$ by optimization or iteration. In practice the optimized $\bm{W}$ is seen as an estimation of $\bm{A}^{-1}$, thus ICs $S=\bm{W}X$ are estimated.

Throughout this paper, we take the assumption that $x_i, s_i$ are real and share the same length for simplicity, and in addition, $\bm{s}$ are zero-mean and unit-variance. The independence demands the covariance matrix of $\bm{s}$ equal the identity. As $\bm{s}$ are zero mean, this can be written as $E\{\bm{s}\bm{s}^T\}=\bm{I}$, where $E\{\cdot\}$ represents expectation and $E\{\bm{s}\bm{s}^T\}$ is the matrix with entries $E\{s_is_j\}$. The assumption concretely reads 
\begin{equation}
    E\{s_is_j\}=\begin{cases}
        0, & i=j, \\
        1, & i\neq j.
        \end{cases}
\end{equation}
That the inverse of coefficient matrix $\bm{A}^{-1}$ exists is also assumed, so that we can write
\begin{equation}
    \bm{s}=\bm{A}^{-1}\bm{x}
\end{equation}
for later discussion.

\subsection{\label{sec2.2}Preprocessing}
Considering preprocessing before applying the ICA algorithm always makes the problem simpler and better conditioned. Especially, techniques like centering and whitening are almost necessary steps for ICA. 

Centering means making samples $x_i$ zero-mean by subtracting its expectation $x_{ij}-E\{\bm{x}_i\}$, where the expectation is in general estimated by the sample mean. Though Eq.~{eq3} has implied that $\bm{x}_j$ is zero-mean, by whitening ICA also works in the non-zero mean case. 

Suppose $\bm{x}_i$ itself has been centered, whitening is to transform $X$ into samples of 'whitened' variables $\bm{y}$ so that the covariance matrix of $\bm{y}$ equals the identity matrix: $E\{\bm{y}\bm{y}^T\}=\bm{I}$. Whitening is always possible by computing the eigendecomposition of the covariance matrix: $E\{\bm{xx}^T\}=\bm{EDE}^T$, where $\bm{E}$ and $\bm{D}$ are matrices composed of eigenvectors and eigenvalues, respectively. Then it is easy to check that
\begin{equation}
    \bm{y}=(E\{\bm{xx}^T\})^{-1/2}\bm{x}=\bm{ED}^{-1/2}\bm{E}^T\bm{x}=\bm{ED}^{-1/2}\bm{E}^T\bm{A}\bm{s},
\end{equation}
leads to
\begin{equation}
    E\{\bm{y}\bm{y}^T\}=\bm{I}.
\end{equation}
By assuming that $E\{\bm{s}\bm{s}^T\}=\bm{I}$, it's evident that the coefficient matrix $\bm{ED}^{-1/2}\bm{E}^T\bm{A}$ is orthogonal.

Substitute these statistics by sample mean:
\begin{equation}
    Y=\bm{ED}^{-1/2}\bm{E}^TX=\bm{ED}^{-1/2}\bm{E}^T\bm{A}S,
\end{equation}
where we can regard $Y$ as samples of $\bm{y}$. Note that $Y$ differs from $S$ by only an orthogonal transformation. Therefore, whitening makes it suffices to search through orthogonal matrices only for the optimal $\bm{W}=(\bm{ED}^{-1/2}\bm{E}^T\bm{A})^{-1}$ if we replace $X$ by $Y$.

\subsection{\label{sec2.3}KICA}
Among all the ICA algorithms, KICA, which we will introduce in a second, appears to be numerically relatively robust \cite{KICA2002} and related closely to matrix operations. As introduced before, KICA minimizes its contrast function for $\bm{W}$ by optimization for estimation of $\bm{A}^{-1}$.

The contrast function is determined and can be computed after selecting a kernel function $K(x,y)$, e.g. a Gaussian kernel we use later:
\begin{equation}
    K(x,y)=\exp\{\frac{-\left|x-y\right|^2}{2\sigma^2}\}.
    \label{eq5}
\end{equation}
The contrast function is computed by the following four steps, which can be considered roughly as its definition.

Step 1: Compute $K_i$ with selected kernel function $K(x,y)$ and samples $X$:
\begin{align}
    (\tilde{K}_i(X))_{jk}&=K(x_{ij},x_{ik}),\\
    K_i(X)=(\bm{I}-&\bm{1}/N)\tilde{K}_i(\bm{I}-\bm{1}/N),
    \label{eq14}
\end{align}
where $\tilde{K}_i$ ($K_i$) is called (centered) Gram matrix, respectively. Variables in parentheses specify the input samples. In this paper, the Gram matrices we used generally refer to the centered one in the absence of a specific designation. We denote the matrix with all entries 1 by $\bm{1}$ and the identity matrix by $\bm{I}$.

Step 2: Perform an eigenvalue decomposition for all Gram matrices:
\begin{equation}
    K_i(X)=U_i\Lambda_iU_i^T.
    \label{eq8}
\end{equation}
Denote the $k$-th eigenvalue of $K_i$ by $\lambda_{ik}$ and corresponding eigenvectors by $\vec{u}_{ik}$. $U_i$ and $\Lambda_i$ are defined as $U_i=(\vec{u}_{i1},\vec{u}_{i2},\dots,\vec{u}_{iN})$ and $\Lambda_i=\mbox{diag}(\lambda_{i1},\lambda_{i2},\dots,\lambda_{iN})$, respectively. 

Based upon the conclusion that only $M_i=O(\log N)$ largest eigenvalues make sense for Gaussian kernel \cite{KICA2002,nystrom2005}, an approximated decomposition can be used instead which is given by $K_i\approx\tilde{U}_i\tilde{\Lambda}_i\tilde{U}_i^T$, for $\tilde{U}_i=(\vec{u}_{i1},\vec{u}_{i2},\dots,\vec{u}_{iM_i})$ and $\tilde{\Lambda}_i=\mbox{diag}(\lambda_{i1},\lambda_{i2},\dots,\lambda_{iM_i})$. 

Step 3: Choose a positive constant $\kappa$, and then compute the positive-definite matrix $\mathcal{R_{\kappa}}$:
\begin{equation}\mathcal{R_{\kappa}}(X)=
    \begin{pmatrix}
        I&R_1U_1^TU_2R_2&\cdots&R_1U_1^TU_mR_m\\
        R_2U_2^TU_1R_1&I&\cdots&R_2U_2^TU_mR_m\\
        \vdots&\vdots&\ddots&\vdots\\
        R_mU_m^TU_1R_1&R_mU_m^TU_2R_2&\cdots&I
    \end{pmatrix},
    \label{eq16}
\end{equation}
where $R_i=$diag$(\lambda_{i1}^{\prime},\lambda_{i2}^{\prime},\dots,\lambda_{iM_i}^{\prime})$ with the notation that $\lambda_{ik}^{\prime}=\lambda_{ik}/(\lambda_{ik}+\kappa)$.

Step 4: Compute the determinant of $\mathcal{R}_{\kappa}$, $\det(\mathcal{R}_{\kappa})$, and contrast function is taken as $J(X)=-\ln(\det(\mathcal{R}_{\kappa}))$. 

We have introduced the overall computational process for contrast function $J(X)$ hereinbefore. The only job left is to perform optimization with $J(X)$. Moreover, as we have discussed in the last section, by replacing $X$ with whitened $Y$, the problem is reduced to an optimization for an orthogonal matrix, which is the optimization problem on the Stiefel manifold with relatively mature techniques to deal with \cite{OptiStieMani1998}.

\section{\label{sec3.0}Quantum algorithm for adapted KICA}

ICA algorithms like KICA, based mainly on matrix operations, may achieve better quantum speedup. Different from the original KICA, we replace the contrast function by $R_{\kappa}$ to circumvent the restrictions of measurements, where $R_{\kappa}$ is defined as the element-wise absolute values of $\mathcal{R}_{\kappa}$. We introduce a quantum algorithm to compute $J(\bm{W}Y)=\det(R_{\kappa}(\bm{W}Y))$ in this section. We use the following notations that $X=(x_1,x_2,\dots,x_m)^T$, $X'=(x'_1,x'_2,\dots,x'_m)^T$, $Y=(y_1,y_2,\dots,y_m)^T$ respectively represents the original, centered and preprocessed samples and another one that $\bm{W}Y=(y'_1,y'_2,\dots,y'_m)^T$.

\subsection{\label{sec3.1}Main results}
The problem to be solved on a quantum computer is stated in Problem~\ref{pro2}:
\begin{mypro}[Computation of contrast function]
    \label{pro2}
    Given $N$ samples $x_i$ of each of the $m$ random variables $\bm{x}_i$. Suppose we have the oracle access to $x_i$: $O_{x_i}|j\rangle|0\rangle=|j\rangle|x_{ij}\rangle$, where $|x_{ij}\rangle$ is an s-bit binary string representing $x_{ij}$ by qubits. The target is to estimate $\det(R_{\kappa}(\bm{W}Y))$ for some orthogonal $\bm{W}$ with only elementary gates and queries to $O_{x_i}$.
\end{mypro}

Denote the centered kernel by
\begin{equation}
    \begin{split}
    K'(y,x)=&K(y,x)-\int_{\mathbb{R}}d\rho_{z_i}(\alpha) K(\alpha,x)-\\
    &\int_{\mathbb{R}}d\rho_{z_i}(\beta) K(y,\beta)+\int_{\mathbb{R}^2}d\rho_{z_i}(\alpha)d\rho_{z_i}(\beta)K(\alpha,\beta).
    \end{split}
    \label{eq34new}
\end{equation}
In our analysis, we require $\bm{y}_i'$ to satisfy
\begin{equation}
    \int_{\mathbb{R}}K'(y_{im}',x)g(x)f_{y_i'}(x)dx=\frac{1}{N}\sum_{n=1}^N(K_i)_{mn}g(y_{im}'),
    \label{eq35new}
\end{equation}
for all $i,m$, arbitrary $\bm{W}$ and any function $g\in L^2(\mathbb{R})$, where $f_{y_i'}(x)$ is the probability density function (PDF) of $y_i'$. Under the commonly used assumption that the distribution of $\bm{x}_i$ is exactly one of obtained samples, this condition is naturally satisfied. 

The oracle $O_{x_i}$ may be provided by qRAM \cite{QRAM2008}, which allows this state preparation within $O(\mbox{poly}\log N)$ operations. Using oracles, we estimate the covariance matrix $M$ of $\bm{X}$ and then execute the preprocessing subprogram, where the estimation is denoted by $\tilde{M}$. This causes the input error so that the final output estimates $R_{\kappa}(\bm{W}\tilde{Y})$ instead. Denote by $\mu_M$ the condition number of $M$. The input error can then be expressed as
\begin{equation}
    \tilde{y}_{ij}=\sum_{k=1}^{m}(\tilde{M}^{-\frac{1}{2}})_{ik}(x_{kj}-E\{\bm{x}_k\}),
    \label{eq17}
\end{equation} 
and
\begin{equation}
    \|\tilde{M}^{-\frac{1}{2}}-M^{-\frac{1}{2}}\|_2<\epsilon_2.
    \label{eq16.5}
\end{equation}

The main result of our method for Problem~\ref{pro2} is stated as Theorem~\ref{the1}.
\begin{mythe}
    \label{the1}
    Denote by $d=\Theta(m\log N)$ and $\xi$ the dimension and the minimal eigenvalue of $R_{\kappa}(\bm{W}\tilde{Y})$. Given $\epsilon_1<\frac{1}{d^2}, \epsilon_2<0.2$ and $\kappa$, and samples $x_i$ with size $N$ of $m$ random variables $\bm{x}_i$ together with oracles access to $x_i$, $O_{x_i}|j\rangle|0\rangle=|j\rangle|x_{ij}\rangle$, there exists a quantum algorithm outputting $\det(\tilde{R}_{\kappa}(\bm{W}\tilde{Y}))$, where
    \begin{equation}
        \frac{|\det(\tilde{R}_{\kappa}(\bm{W}\tilde{Y}))-\det(R_{\kappa}(\bm{W}\tilde{Y}))|}{\det(R_{\kappa}(\bm{W}\tilde{Y}))}=\tilde{O}(\epsilon_1),
    \end{equation}
    using $\tilde{O}(\frac{1}{\xi^2\epsilon_1^{2}}+\frac{\mu_M^2}{\epsilon_2\|M\|_2^{3/2}})$ elementary gates and queries to each $O_{x_i}$.
    \end{mythe}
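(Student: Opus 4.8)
The plan is to split the algorithm into two essentially independent stages: a preprocessing (whitening) stage that produces the input $\tilde{Y}$ and accounts for the $\epsilon_2$-dependent cost, and an estimation stage that, holding $\tilde{Y}$ fixed, computes $\det(R_{\kappa}(\bm{W}\tilde{Y}))$ to relative accuracy $\epsilon_1$ and accounts for the $\epsilon_1$-dependent cost. The key structural observation is that the quantity controlled in the theorem is stated relative to $R_{\kappa}(\bm{W}\tilde{Y})$ with the \emph{same} $\tilde{Y}$ on both sides, so the whitening error $\epsilon_2$ enters only through the cost of producing $\tilde{Y}$ and need not be propagated into the $\epsilon_1$ bound. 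This decoupling is what makes the two error terms add rather than interact.

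First I would treat preprocessing. Using the oracles $O_{x_i}$, estimate the $m\times m$ covariance matrix $M$ and apply a quantum matrix-inversion/square-root routine realizing the whitening map $\tilde{M}^{-1/2}$ with spectral-norm error $\epsilon_2$ as in Eq.~\eqref{eq16.5}, yielding the whitened samples $\tilde{Y}$ of Eq.~\eqref{eq17}. Tracking the condition-number and norm dependence of the standard quantum linear-algebra bounds is what I expect to reproduce the $\tilde{O}(\mu_M^2/(\epsilon_2\|M\|_2^{3/2}))$ term. Next, for fixed $\tilde{Y}$, I would encode each centered Gram matrix of Eq.~\eqref{eq14} into amplitudes, using coherent arithmetic to evaluate the Gaussian kernel and controlled operations to implement the centering projector $(\bm{I}-\bm{1}/N)$. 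Running eigenvalue estimation extracts the $M_i=O(\log N)$ leading pairs $(\lambda_{ik},\vec{u}_{ik})$, and, executing two such circuits in parallel, the cross-overlaps $\vec{u}_{ik}^{T}\vec{u}_{jl}$ entering the blocks $R_iU_i^{T}U_jR_j$ of Eq.~\eqref{eq16} are estimated by measurement. Since the overlap magnitudes $|\vec{u}_{ik}^{T}\vec{u}_{jl}|$ are precisely what a swap-test-type measurement returns, the element-wise absolute value adaptation is exactly the quantity the measurement produces, which is the whole point of replacing $\mathcal{R}_\kappa$ by $R_\kappa$. Each entry is obtained to additive precision $\delta$ with $O(1/\delta^2)$ repetitions — the source of the $\epsilon_1^{-2}$ rather than $\epsilon_1^{-1}$ scaling — after which $\det(\tilde{R}_\kappa)$ is computed classically, as $R_\kappa$ is only $d\times d$ with $d=\Theta(m\log N)$, in $\mathrm{poly}(d)=\mathrm{polylog}(N)$ time.

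The hard part will be the determinant error propagation. Writing $\tilde{R}_\kappa=R_\kappa+E$ with entrywise error $\lesssim\delta$, I would use the identity $\det(R_\kappa+E)=\det(R_\kappa)\det(\bm{I}+R_\kappa^{-1}E)$ and linearize, so that the relative error is controlled by $|\mathrm{tr}(R_\kappa^{-1}E)|\le\|R_\kappa^{-1}\|_2\,\|E\|_\ast\le(d/\xi)\|E\|_2$, together with $\|E\|_2\le d\delta$. Demanding the relative error to be $\tilde{O}(\epsilon_1)$ then forces $\delta\sim\xi\epsilon_1/d^2$, whence $1/\delta^2=\tilde{O}(1/(\xi^2\epsilon_1^2))$ once the $\mathrm{polylog}(N)$ powers of $d$ are absorbed into $\tilde{O}$; the factor $1/\xi^2$ traces directly to the two appearances of $\|R_\kappa^{-1}\|_2=1/\xi$. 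The role of the hypothesis $\epsilon_1<1/d^2$ is to keep the spectral radius of $R_\kappa^{-1}E$ well below $1$, so that the multiplicative expansion $\det(\bm{I}+R_\kappa^{-1}E)=1+\mathrm{tr}(R_\kappa^{-1}E)+\cdots$ is genuinely dominated by its linear term and the higher-order contributions are negligible.

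The chief delicacy I anticipate is twofold. First, I must combine two independent error channels — the phase-estimation error in the eigenvalues $\lambda_{ik}$ (equivalently in the factors $\lambda_{ik}'=\lambda_{ik}/(\lambda_{ik}+\kappa)$) and the sampling error in the overlaps — into a single clean entrywise bound on $E$, which requires showing the map from these two errors to each entry $\lambda_{ik}'\lambda_{jl}'|\vec{u}_{ik}^{T}\vec{u}_{jl}|$ is Lipschitz with controlled constants (here the regularization by $\kappa$ helps bound the sensitivity of $\lambda_{ik}'$). Second, I must verify that taking element-wise absolute values still yields a matrix whose minimal eigenvalue $\xi$ is positive, so that $\det(R_\kappa)>0$ and the relative-error statement remains meaningful throughout; this positive-definiteness is what legitimizes the use of $\|R_\kappa^{-1}\|_2=1/\xi$ in the perturbation bound above.
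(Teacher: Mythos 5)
Your overall architecture matches the paper's proof: you decouple the $\epsilon_2$-dependent preprocessing (the paper's Lemma~\ref{lem11}) from the $\epsilon_1$-accuracy estimation stage, you correctly observe that both determinants in the theorem are evaluated at the \emph{same} $\tilde{Y}$ so that $\epsilon_2$ enters only through the cost of producing $\tilde{Y}$ (the paper says exactly this: the covariance matrix is estimated once, and the distinction between Eqs.~\ref{eq17} and~\ref{eq18} does not matter), and you propagate an entrywise error $\delta$ on $R_{\kappa}$ to a relative determinant error of order $d^2\delta/\xi$, forcing $\delta\sim\xi\epsilon_1/d^2$. The paper reaches the same bound via a cited determinant-perturbation inequality (Appendix~\ref{app2.3}) rather than your $\det(\bm{I}+R_{\kappa}^{-1}E)$ linearization, but these are equivalent, and the role you assign to $\epsilon_1<1/d^2$ (keeping the perturbation expansion in its linear regime) is the same role it plays there. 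A minor inaccuracy: the whitening map $\tilde{M}^{-1/2}$ is computed \emph{classically} (it is only $m\times m$) and applied through the controlled weighted-sum circuit, not by a quantum inversion routine, though this does not affect the complexity since factors in $m$ are neglected.

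The genuine gap is in your complexity accounting for the estimation stage. You obtain each entry by $O(1/\delta^2)$ repetitions of a swap-test-type measurement and declare this the source of the $\epsilon_1^{-2}$ scaling. But each repetition is not cheap: to resolve $\lambda_{ik}/N$ (hence the regularized factors $\lambda_{ik}'$ and the projection $\Pi_{\epsilon/2}$) to precision $O(\xi\kappa\epsilon_1)$, every run of the state-preparation circuit $O_{\epsilon}$ must contain phase estimation of the block-encoded $e^{iK_it/N}$, costing $\tilde{O}(1/(\xi\epsilon_1))$ queries per run (Lemmas~\ref{lem4.75} and~\ref{lem5}). Naive sampling therefore costs $\tilde{O}\bigl(\delta^{-2}\cdot(\xi\epsilon_1)^{-1}\bigr)=\tilde{O}(1/(\xi^3\epsilon_1^{3}))$, overshooting the claimed $\tilde{O}(1/(\xi^2\epsilon_1^{2}))$ by a factor $1/(\xi\epsilon_1)$. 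The paper closes this by using quantum \emph{amplitude estimation} on $O_{\epsilon}$ with precision $\xi\kappa\epsilon_1/4$, which needs only $O(1/(\xi\epsilon_1))$ coherent uses of $O_{\epsilon}$; the total $\tilde{O}(1/(\xi^2\epsilon_1^2))$ is then the product of two $\tilde{O}(1/(\xi\epsilon_1))$ factors — phase estimation inside $O_{\epsilon}$ times amplitude estimation on top — not a $1/\delta^2$ sampling overhead. Relatedly, the amplitude actually measured is the product $\frac{\lambda_{ik}}{N}\langle\vec{u}_{jl},\vec{u}_{ik}\rangle$, not the bare overlap; this is why Appendix~\ref{app2.2} tracks the error $\epsilon_I$ of $\mu_{ik}\langle\vec{u}_{ik},\vec{u}_{jl}\rangle$ and uses the $\kappa$-regularization to bound the sensitivity, which is the clean entrywise Lipschitz bound your sketch anticipated but did not carry out.
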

We neglect factors in $m,\kappa$ and use $\tilde{O}$ to denote the complexity up to polylogarithmic factors in $N, \epsilon_1, \epsilon_2, \mu_E$. It is worth noting that $\xi$ is usually $O(1)$ during our tests when appropriately choosing $\kappa$, for example, to be $0.1$.


Further analysis of the special case is given in Theorem~\ref{the2}, that $\tilde{\bm{y}}'=\bm{W}\tilde{\bm{y}}$ has few mutual dependence (called near-independent case later), where a better complexity can be achieved.

\begin{mythe}
    \label{the2}
    Continue with the notation of Theorem 1, but require that $\epsilon_1<1, \epsilon_2^2\ll\epsilon_2=o(d^{-2})$ instead. When 
    \begin{equation}
        \tilde{\bm{y}}'=(1+\epsilon_2F)\bm{s}+o(\epsilon_2^2),
    \end{equation}
    where $\|F\|_{max}\leq1$ is assumed, the precision of output can be changed into:
    \begin{equation}
        \frac{|\det(\tilde{R}_{\kappa}(\bm{W}\tilde{Y}))-\det(R_{\kappa}(\bm{W}\tilde{Y}))|}{\det(R_{\kappa}(\bm{W}\tilde{Y}))}=\tilde{O}(\epsilon_1\epsilon_2),
    \end{equation}
    with the query and gate complexity of $\tilde{O}(\frac{1}{\epsilon_1^{2}\epsilon_2}+\frac{\mu_M^2}{\epsilon_2\|M\|_2^{3/2}})$.
\end{mythe}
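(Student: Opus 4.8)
The plan is to obtain Theorem~\ref{the2} as a sharpening of Theorem~\ref{the1}: first convert the near-independence hypothesis into a structural statement about $R_{\kappa}(\bm{W}\tilde{Y})$, then re-run the error and complexity analysis with that structure in hand. First I would substitute $\tilde{\bm{y}}'=(1+\epsilon_2F)\bm{s}+o(\epsilon_2^2)$ into the centered Gram matrices of Eq.~\eqref{eq14}. Because the exact sources $\bm{s}$ are mutually independent, for them every cross-block $R_iU_i^TU_jR_j$ vanishes and hence $R_{\kappa}(\bm{s})=I$; the $O(\epsilon_2)$ perturbation of the samples perturbs each $K_i$, its leading eigenvectors $U_i$, and therefore each cross-block at first order in $\epsilon_2$. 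The conclusion I expect is $R_{\kappa}(\bm{W}\tilde{Y})=I+E$ with $E$ block-off-diagonal and $\|E\|=\tilde{O}(\epsilon_2)$, which yields $\xi=1-\tilde{O}(\epsilon_2)=O(1)$ and, via $\operatorname{tr}E=0$ together with $\det(I+E)=1-\tfrac12\operatorname{tr}(E^2)+O(\epsilon_2^3)$, the baseline $\det R_{\kappa}(\bm{W}\tilde{Y})=1+\tilde{O}(\epsilon_2^2)$.

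Second, I would insert $\xi=O(1)$ into the determinant perturbation already used for Theorem~\ref{the1}, $\det(\tilde R_{\kappa})/\det(R_{\kappa})=\det(I+R_{\kappa}^{-1}\Delta)$. With $R_{\kappa}^{-1}=I-E+O(\epsilon_2^2)$ and the estimation error $\Delta$ confined to the off-diagonal blocks, the crude bound $|\det(I+R_{\kappa}^{-1}\Delta)-1|=\tilde{O}(\|\Delta\|_2)$ shows that a per-entry additive precision $\delta$ certifies a relative determinant error of order $\tilde{O}(\delta)$. To reach the sharper guarantee stated in the theorem I would therefore simply target the smaller budget $\delta=\tilde{O}(\epsilon_1\epsilon_2)$, the point being that this finer estimation is now cheap.

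Third comes the complexity, which is where the small blocks pay off. The off-diagonal entries to be read out correspond to quantum amplitudes, and hence measurement probabilities, of order $\epsilon_2$; a variance-aware amplitude-estimation bound then reaches additive precision $\delta$ at a cost proportional to the small probability, $\tilde{O}(\epsilon_2/\delta^2)$, rather than the generic $\tilde{O}(1/\delta^2)$. Substituting $\delta=\tilde{O}(\epsilon_1\epsilon_2)$ gives the measurement complexity $\tilde{O}(\epsilon_2/(\epsilon_1\epsilon_2)^2)=\tilde{O}(1/(\epsilon_1^2\epsilon_2))$, a factor $\epsilon_2$ below the $\tilde{O}(1/(\epsilon_1^2\epsilon_2^2))$ that a naive application of Theorem~\ref{the1} at this precision would incur. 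The preprocessing term $\tilde{O}(\mu_M^2/(\epsilon_2\|M\|_2^{3/2}))$ is independent of the contrast-function estimation and is inherited verbatim.

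The hard part will be the first step: rigorously bounding the first-order response of the truncated principal components $U_i$ to the $\epsilon_2F$ perturbation and checking that the $o(\epsilon_2^2)$ remainder in the hypothesis cannot contaminate the $O(\epsilon_2)$ estimate of the cross-blocks. The element-wise absolute value relating $R_{\kappa}$ to $\mathcal{R}_{\kappa}$ is non-smooth at the origin, so I would argue entrywise through $|b|\le|b_0|+|b-b_0|$ and the uniform bound $\|F\|_{\max}\le1$ rather than by differentiation, and I would use $\epsilon_2=o(d^{-2})$ precisely to keep the $\Theta(d^2)$ accumulated cross terms in $\operatorname{tr}(E^2)$ and in the error propagation under control. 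What remains is then the bookkeeping that reconciles the variance-reduced sampling cost with the precision budget across the admissible range of $\epsilon_1$ and $\epsilon_2$.
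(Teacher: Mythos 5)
Your first and second steps track the paper closely: the structural statement you want in step one is exactly what the paper establishes in Appendix~\ref{app4} (Lemmas~\ref{lem14},~\ref{lem15} and Corollary~\ref{cor6} give $\langle\vec{u}_{ik},\vec{u}_{jl}\rangle\approx\epsilon_2F_{ij}C_{ik,jl}$, hence the uniform bound $|\langle\vec{u}_{ik},\vec{u}_{jl}\rangle|\leq\epsilon_2G$ and $\xi\geq1-d\epsilon_2G=O(1)$ used in the paper's proof), and your per-entry budget $\delta=\tilde{O}(\epsilon_1\epsilon_2)$ feeding the determinant perturbation bound of Appendix~\ref{app2.3} is also what the paper does. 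The gap is in your third step, which is where the theorem's whole content lies. Your claimed cost law for amplitude estimation, $\tilde{O}(\epsilon_2/\delta^2)$ (``proportional to the small probability''), is not a correct scaling: classical variance-aware sampling of a probability $p$ to additive precision $\delta_p$ costs $O(p/\delta_p^2)$, while quantum amplitude estimation is Heisenberg-limited at $O(\sqrt{p}/\delta_p)$; converting either consistently for an amplitude of magnitude $O(\epsilon_2)$ estimated to precision $\epsilon_1\epsilon_2$ gives respectively $\tilde{O}(1/(\epsilon_1^2\epsilon_2^2))$ and $\tilde{O}(1/(\epsilon_1\epsilon_2))$ invocations of $O_{\epsilon}$ --- neither is your formula, and your expression lands on the stated total only by coincidence. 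Moreover you count only measurement repetitions and omit the dominant per-invocation cost: each use of $O_{\epsilon}$ contains the eigenvalue-estimation circuits $U_i,U_j$, costing $\tilde{O}(1/\epsilon)$ apiece (Lemmas~\ref{lem4.75},~\ref{lem5}), so if you ran the Theorem~\ref{the1} machinery naively at budget $\epsilon_1\epsilon_2$ you would pay $\tilde{O}(1/(\epsilon_1^2\epsilon_2^2))$ in total.

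The missing idea that actually produces the $1/\epsilon_2$ saving is an asymmetric precision allocation, visible in the entrywise error bound of Appendix~\ref{app2.2}: the total entry error is $\frac{2}{\kappa}\epsilon_{\mu}\langle\vec{u}_{ik},\vec{u}_{jl}\rangle+\frac{2}{\kappa}\epsilon_I$, where the phase-estimation error $\epsilon_{\mu}$ enters \emph{multiplied by} the inner product, which is itself $O(\epsilon_2G)$ in the near-independent regime. Hence the eigenvalue estimation can be \emph{relaxed} to $\epsilon=\kappa\epsilon_1/(4G)$, independent of $\epsilon_2$, while only the amplitude estimation is tightened to $\epsilon_I=\epsilon_1\epsilon_2\kappa/4$; the Heisenberg-limited cost $\tilde{O}(1/\epsilon_I)$ iterations, each of depth $\tilde{O}(1/\epsilon)$, then gives $\tilde{O}(1/(\epsilon_1\epsilon_2)\cdot 1/\epsilon_1)=\tilde{O}(1/(\epsilon_1^2\epsilon_2))$. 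Without this coupling argument, your proof cannot legitimately reach the claimed complexity: tightening both precisions overshoots by $1/\epsilon_2$, and your variance-reduction substitute does not hold for the estimator actually used. You would also need the paper's treatment of the discarded small-eigenvalue entries (bounded by $\frac{2}{\kappa}\epsilon_{\mu}\cdot G\epsilon_2=\epsilon_1\epsilon_2/2$ under the relaxed $\epsilon_{\mu}$), which your entrywise bookkeeping does not yet cover.
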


The main idea of our method is that, we first construct the circuit $O_{y'_i}|j\rangle|0\rangle=|j\rangle|y'_{ij}\rangle$ by a quantum preprocessing subprogram, then manage to encode the Gram matrix $K_i(\bm{W}Y)$ onto a quantum state, and finally by quantum circuits 
and quantum measurements, we obtain the classical information to reconstruct the low dimensional $R_{\kappa}(\bm{W}Y)$ together with its determinant. The preprocessing subprogram and function estimation are respectively introduced in Section~\ref{sec3.2} and Section~\ref{sec3.3}. We summarize our method for the near-independent case in Algorithm~\ref{alg1}, and the general one can be obtained by slight modification. A flow chart is shown in Fig.~\ref{flowchart} including optimization discussed in Section~\ref{sec3.4}.

\begin{figure*}[ht]
    \centering
    \includegraphics[height=13cm,width=13cm]{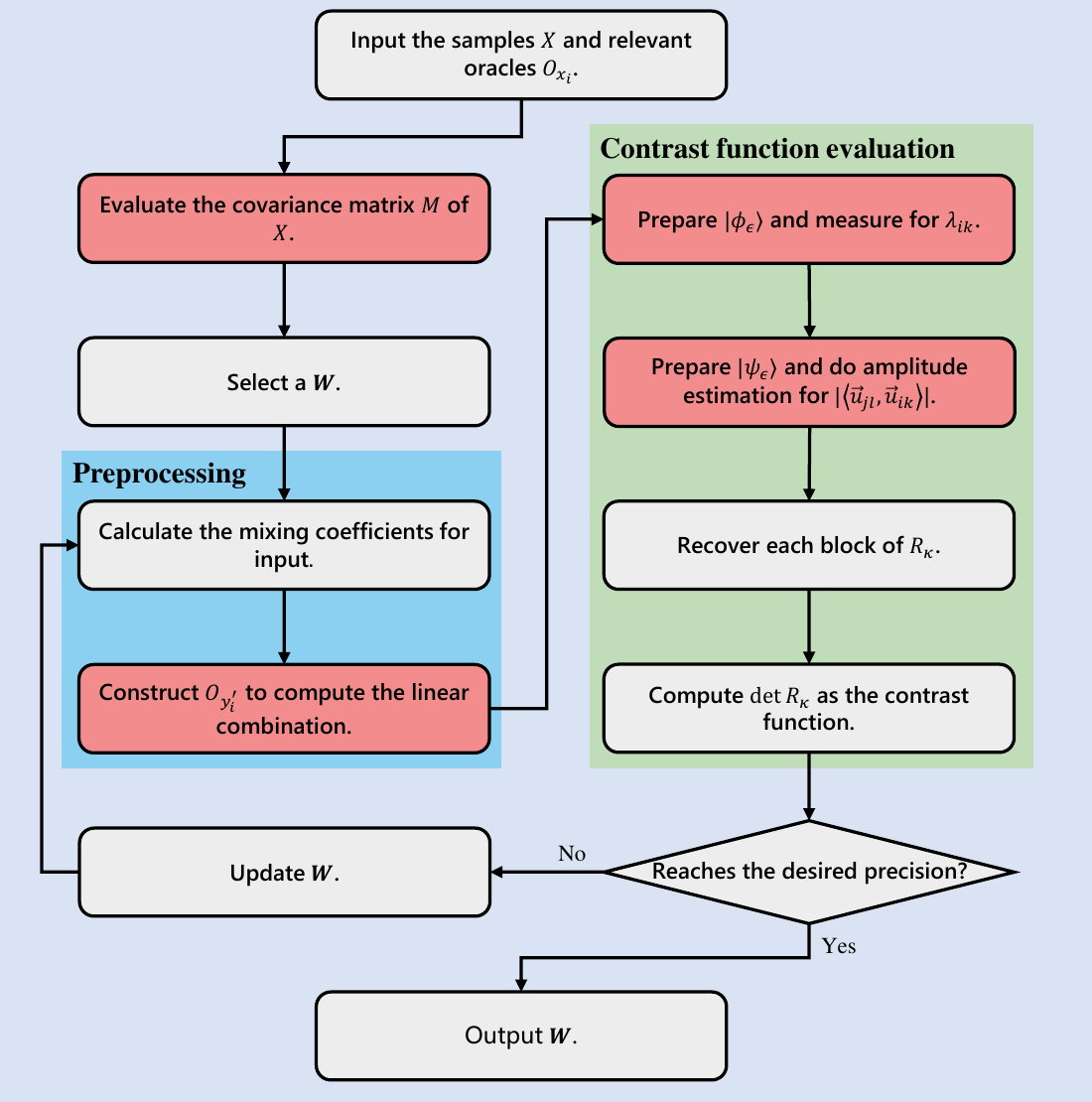}
    \captionsetup[subfloat]{labelsep=none, format=plain, labelformat=empty}
    \caption{\label{flowchart} A flowchart outlining the process for optimizing to obtain the optimal unmixing coefficients $\bm{W}$, where the red block implies the procedure using a quantum computer.}
\end{figure*}

\begin{algorithm}[htb]
    \textbf{Input:} Near-independent samples $X$, Sample size $N$, state preparation oracles $O_{x_i}$, parameterized matrix $\bm{W}$, kernel function $K(x,y)$, preselected parameter $\kappa$, error tolerance $\epsilon_1$, $\epsilon_2$.

    \textbf{Output:} $\det(\tilde{\mathcal{R}}_{\kappa}(\bm{W}\tilde{Y}))$ with multiplicative error of $\tilde{O}(\epsilon_1\epsilon_2)$ and $\sqrt{\sum_{i=1}^{m}(\tilde{y}'_{ij}-y'_{ij})^2}<\epsilon_2$.

    \textbf{Runtime:} $\tilde{O}(\frac{1}{\epsilon_1^2\epsilon_2}+\frac{\mu_M^2}{\epsilon_2\|M\|_2^{3/2}}))$ uses of oracles $O_{x_i}$ and its inverse, and elementary quantum gates.

    \textbf{Procedure:}\;
    Compute the covariance matrix $M$ of $\bm{x}$\;
    Use $M$ to construct the state preparation $O_{\tilde{y}'_i}$ for every $i$, and set $\bm{z}=\tilde{\bm{y}}'$\;
    \ForEach{$0\leq i<m$}{
        \For{$k=1,2,\dots,K=\tilde{O}(1/\epsilon_1)$}{
            Use the circuit $O_{\epsilon_1}$ to prepare $|\phi_{\epsilon_1}\rangle$\;
            Measure the last register for $\lambda_{ik}/N$\;
        }
    } 
    \ForEach{$0\leq i<j<m$}{
        \ForEach{$\tilde{\lambda}_{ik}$ and $\tilde{\lambda}_{jl}$ measured}{
            Use the amplitude estimation with $U_{\epsilon_1}$ for overlaps between $|\psi_{\epsilon_1}\rangle$ and $|\tilde{\lambda}_{ik},\tilde{\lambda}_{jl}\rangle$\;
        }
    }
    Calculate the contrast function using the measured values.
\caption{Computation of the contrast function.\label{alg1}}
\end{algorithm}

\subsection{\label{sec3.2}Preprocessing on a quantum computer}
The preprocessing subprogram has three steps: Step 1: use the quantum mean estimator for the covariance matrix $M$ of $\bm{x}$; step 2: classically compute the mixing coefficients, $\bm{W}M^{-1/2}$; step 3: use controlled weighted sum algorithm introduced in \cite{QuanArith2017} to linearly transform $O_{x_i}$ into $O_{y_i'}$. The formal result is stated in Lemma~\ref{lem11}.
\begin{mylem}
    \label{lem11}
    Given oracles $O_{x_i}:|j\rangle|0\rangle\to|j\rangle|x_{ij}\rangle$ when $\max_{i,j}\{|x_{ij}-E\{\bm{x}_i\}|\}\leq1$ is assumed, sample means $E\{x_i\}$ and the covariance matrix $M$ can be estimated within $\tilde{O}(\frac{\mu_M^2m}{\|M\|_2^{3/2}\epsilon})$ gates and queries to $O_{x_i}$, so that the estimated $\tilde{M}^{-\frac{1}{2}}$ satisfies that
    \begin{equation}
        \|\tilde{M}^{-\frac{1}{2}}-M^{-\frac{1}{2}}\|_2<\epsilon,
    \end{equation}
    where $\mu_M$ is the condition number of $M$. With such estimation, $O_{y'_i}$ can be executed with $O(\mbox{poly}\log(\frac{N}{\epsilon}))$ gates and queries to $O_{x_i}$, where
    \begin{equation}
        O_{y'_i}|j\rangle|0\rangle=|j\rangle|\tilde{y}'_{ij}\rangle.
        \label{eq16real}
    \end{equation}
    We have 
    \begin{equation}
        \tilde{y}'_{ij}=\sum_{k=1}^{m}(\bm{W}\tilde{M}^{-\frac{1}{2}})_{ik}(x_{kj}-\tilde{E}\{\bm{x}_k\}).
        \label{eq18}
    \end{equation}
\end{mylem}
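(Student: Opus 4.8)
The plan is to prove Lemma~\ref{lem11} by following the three-step structure of the preprocessing subprogram and bounding the cost and error of each step separately. First I would treat the estimation of the sample means and the covariance matrix $M$. Writing $M_{ik}=\frac1N\sum_{j}(x_{ij}-E\{\bm{x}_i\})(x_{kj}-E\{\bm{x}_k\})$, each entry is a sample mean of a quantity bounded by $1$ under the stated assumption $\max_{i,j}\{|x_{ij}-E\{\bm{x}_i\}|\}\le1$. I would therefore prepare the uniform superposition $\frac{1}{\sqrt N}\sum_j|j\rangle$, apply $O_{x_i}$ (and $O_{x_k}$) to load the samples, write the product into an amplitude by a controlled rotation, and invoke quantum amplitude estimation. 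This yields an estimate of each $M_{ik}$, and of each mean $E\{\bm{x}_i\}$, to additive precision $\eta$ using $\tilde O(1/\eta)$ queries; the quadratic speedup over classical sampling is the source of the $1/\epsilon$ rather than $1/\epsilon^2$ dependence in the final count.

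Next I would convert the entrywise precision into the spectral guarantee on $\tilde M^{-1/2}$. The chain of estimates is as follows: an entrywise error $\eta$ gives $\|\tilde M-M\|_2\le m\eta$ for the symmetric estimate, and the map $A\mapsto A^{-1/2}$ is operator-Lipschitz on the spectrum of $M$ with constant controlled by $\sup|f'|=\tfrac12\lambda_{\min}(M)^{-3/2}$, where $f(x)=x^{-1/2}$. Hence $\|\tilde M^{-1/2}-M^{-1/2}\|_2=\tilde O(\lambda_{\min}(M)^{-3/2}\,\|\tilde M-M\|_2)$, and since $\lambda_{\min}(M)=\|M\|_2/\mu_M$ this is exactly where the $\mu_M$- and $\|M\|_2^{-3/2}$-dependence enters. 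Demanding the left-hand side to fall below $\epsilon$ fixes the required $\eta$, and substituting back into the per-entry cost $\tilde O(1/\eta)$ reproduces the stated query count $\tilde O(\mu_M^2 m/(\|M\|_2^{3/2}\epsilon))$, the precise condition-number power following from the perturbation bound for the inverse square root. I would also verify that for this perturbation $\tilde M$ keeps its spectrum inside the region where the Lipschitz estimate is valid, so that $\tilde M^{-1/2}$ is well defined.

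Finally I would construct the oracle $O_{y'_i}$. Having computed the classical matrix $\bm{W}\tilde M^{-1/2}$, an $m\times m$ object obtained cheaply by classical linear algebra, I would, for a given index $j$, query every $O_{x_k}$ in superposition to load $x_{1j},\dots,x_{mj}$ into separate registers, subtract the stored estimates $\tilde E\{\bm{x}_k\}$, and evaluate the fixed linear combination $\sum_{k}(\bm{W}\tilde M^{-1/2})_{ik}(x_{kj}-\tilde E\{\bm{x}_k\})$ by the controlled weighted-sum arithmetic of Ref.~\cite{QuanArith2017}. This is reversible arithmetic on $s$-bit registers with $m$ summands, costing $O(\mathrm{poly}\log(N/\epsilon))$ gates and a constant number of oracle calls once the $m$ prefactor is absorbed, and it produces exactly the $\tilde y'_{ij}$ of Eq.~\eqref{eq18}, namely the centered and whitened sample under the estimated statistics; uncomputing the intermediate registers leaves the claimed action $O_{y'_i}|j\rangle|0\rangle=|j\rangle|\tilde y'_{ij}\rangle$.

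I expect the main obstacle to be the second step, the propagation of the estimation error through the matrix inverse square root: keeping the condition-number dependence tight while ensuring the perturbed matrix stays in the domain where the operator-Lipschitz estimate for $x^{-1/2}$ applies is the delicate part, whereas the amplitude-estimation count and the arithmetic-circuit construction are essentially standard.
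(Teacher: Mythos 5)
Your overall architecture coincides with the paper's proof: the same quantum mean estimator applied to $E\{\bm{x}_i\}$ and $E\{\bm{x}_i\bm{x}_j\}$ (the paper constructs $O_{x_i,x_j}$ by quantum arithmetic and invokes the $\tilde O(1/\eta)$ estimator of \cite{frame1998}), the same entrywise-to-spectral conversion $\|\tilde{M}-M\|_2\leq\|\tilde{M}-M\|_F\leq m\max_{i,j}|D_{ij}|$, and the same endgame of classically computing $\bm{W}\tilde{M}^{-1/2}$ and building $O_{y'_i}$ by the controlled weighted sum of \cite{QuanArith2017} in $O(\mbox{poly}\log(N/\epsilon))$ gates. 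The one place you genuinely diverge is the error-propagation step, and there your argument has two weak points worth fixing.

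First, the claim that $A\mapsto A^{-1/2}$ is operator Lipschitz with constant $\sup|f'|=\tfrac{1}{2}\lambda_{\min}(M)^{-3/2}$ is asserted, not proved: a scalar Lipschitz bound does not in general transfer to the operator norm for matrix functions, so this step needs a theorem. The paper instead bounds $\|\tilde{M}^{-1}-M^{-1}\|_2$ by the Neumann-series perturbation inequality (Eq.~\ref{eqap1}), verifying $\|M^{-1}D\|_2<1$ so that $\tilde{M}^{-1/2}$ is well defined, and then applies Theorem~\ref{the7} ($\|A^r-B^r\|_2\leq ra^{r-1}\|A-B\|_2$ for $0<r<1$) with $A=M^{-1}$, $B=\tilde{M}^{-1}$, $r=\tfrac{1}{2}$. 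Your Lipschitz constant can in fact be made rigorous via the factorization $A^{-1/2}-B^{-1/2}=A^{-1/2}\bigl(B^{1/2}-A^{1/2}\bigr)B^{-1/2}$ combined with the same fractional-power theorem, but as written the step is a gap. Second, your bookkeeping is internally inconsistent: with constant $\tfrac{1}{2}\lambda_{\min}(M)^{-3/2}=\tfrac{1}{2}\mu_M^{3/2}\|M\|_2^{-3/2}$, demanding error below $\epsilon$ fixes $\eta=\Theta\bigl(\epsilon\|M\|_2^{3/2}/(\mu_M^{3/2}m)\bigr)$ and hence a query count scaling as $\mu_M^{3/2}$, not the $\mu_M^2$ you claim your calculation ``reproduces''; the $\mu_M^2$ in the paper arises from its looser two-step chain, in which $\|\tilde{M}^{-1}-M^{-1}\|_2$ effectively picks up $\|M^{-1}\|_2^2\|D\|_2=\mu_M^2\|M\|_2^{-2}\|D\|_2$ before the square-root step contributes $\tfrac{1}{2}\sqrt{\|M\|_2}$. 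Since $\mu_M\geq1$, your tighter constant still satisfies the lemma's stated $\tilde{O}(\mu_M^2m/(\|M\|_2^{3/2}\epsilon))$ upper bound, so the proposal succeeds once the operator-Lipschitz step is justified; your attention to keeping the spectrum of $\tilde{M}$ bounded away from zero mirrors the paper's condition $\|M^{-1}D\|_2<1$ and the upper bound $\|\tilde{M}\|_2\leq\|M\|_2+\|D\|_2$ used to set the constant $a$ in Theorem~\ref{the7}.
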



\begin{proof}
The preprocessing has 2 phases, where the former estimates the covariance matrix and the latter utilizes the acquired knowledge for state preparation. 

For the first phase, we use a quantum mean estimator proposed in \cite{frame1998}, which, for example, outputs the mean of $x_i$ using oracles such as $O_{x_i}$ with gate and query complexity of $\tilde{O}(1/\epsilon)$. Ignore the error induced by quantum arithmetics \cite{QuanArith2017}, by which we can construct $O_{x_{i},x_{j}}:|k\rangle|0\rangle\to|k\rangle|x_{ik}x_{jk}\rangle$. With $O_{x_i}$ and $O_{x_{i},x_{j}}$, both $E\{x_{i}\}$ and $E\{x_{i}x_j\}$ are estimated with accuracy of $\frac{\|M\|_2^{3/2}\epsilon}{8\mu_M^2m}$. Therefore, $M$ can be estimated since $Cov(x_i,x_j)=E\{x_{i}x_j\}-E\{x_{i}\}E\{x_{j}\}$. The gate and query complexity of the computation of coefficients is eventually $\tilde{O}(\frac{\mu_M^2m}{\|M\|_2^{3/2}\epsilon})$ \cite{frame1998}.

Thus, calculate the coefficient matrix $\bm{W}M^{-1/2}$ classically using acquired $M$. Through the process of controlled weighted sum \cite{QuanArith2017} to linear combine $x_i$ on a quantum register, 
$O_{y_i'}$ in Eq.~\ref{eq16real} is constructed. Both query and gate complexity of $O_{y_i'}$ is $O(\mbox{poly}\log(N/\epsilon))$. The error bound of $M$ is proved in Appendix~{\ref{app2.1}}.

\end{proof}

\subsection{\label{sec3.3}Evaluation of the contrast function}
Here we start to introduce a quantum circuit computing $J(Z)$ for general input more than just $\bm{W}Y$ with queries to oracles $O_{z_i},i=1,2,\dots,m$. We use five subprocedures to achieve this goal. The first one is to construct an oracle $\tilde{O}_{K_i}$ outputting entries of uncentered Gram matrix $\tilde{K}_i(Z)$, where
\begin{equation}
    \tilde{O}_{K_i}|j\rangle|k\rangle|z\rangle\to|j\rangle|k\rangle|z\oplus(\tilde{K}_i)_{jk}\rangle.
\end{equation}

The second is to block encode the centered Gram matrix $K_i(Z)$, with the corresponding circuit named $U_{K_i}$. 
With $U_{K_i}$, an eigenvalue estimation circuit $U_{i}:|u_{ik}\rangle|0\rangle\to|u_{ik}\rangle|\lambda_{ik}/N\rangle$ based on phase estimation can be constructed, where 
\begin{equation}
    |u_{ik}\rangle=\sum_{\alpha=1}^N(\vec{u}_{ik})_{\alpha}|\alpha\rangle,
\end{equation}
and $\lambda_{ik}$, $\vec{u}_{ik}$ are $k$-th eigenvalue, normalized eigenvector of $K_i(Z)$. The third subprocedure is to implement $U_{i}$ and $U_{j}$ parallelly on the initial state $|K_i\rangle|0\rangle|0\rangle$, where
\begin{equation}
    |K_i(Z)\rangle=\frac{1}{N}\sum_{j,k=1}^N(K_i)_{jk}|j\rangle|k\rangle.
\end{equation}
shown in Fig.~\ref{fig2}. The output is about
\begin{equation}
    \sum_{k,l=1}^{N}\frac{\lambda_{ik}}{N}\langle \vec{u}_{jl}, \vec{u}_{ik}\rangle|u_{ik}\rangle|u_{jl}\rangle|\lambda_{ik}/N\rangle|\lambda_{jl}/N\rangle+|\perp\rangle,
\end{equation}
for $|\perp\rangle$ being the state of no interest to us. By quantum measurement and amplitude estimation, we obtain the values of $\lambda_{ik}/N$ and $|\langle \vec{u}_{jl}, \vec{u}_{ik}\rangle|$ and then compute $R_{\kappa}(Z)$ by
\begin{equation}
    (R_{\kappa})_{ik,jl}=\frac{\lambda_{ik}/N}{\lambda_{ik}/N+\kappa/2}\frac{\lambda_{jl}/N}{\lambda_{jl}/N+\kappa/2}|\langle\vec{u}_{ik},\vec{u}_{jl}\rangle|.
    \label{eq13}
\end{equation}
The process of measurement is called the fourth step while the classical computation of $R_{\kappa}(Z)$ as well as its determinant $J(Z)$ is called the fifth.

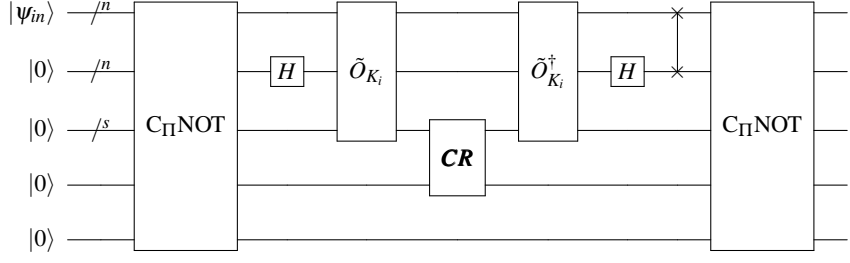
\begin{figure*}[hbt]
    \centerline{
    \Qcircuit @C=1.4em @R=1.4em {
    \lstick{|\psi_{in}\rangle}&{/^n}\qw&\multigate{4}{\mbox{C}_{\Pi}\mbox{NOT}}&\qw     &\multigate{2}{\tilde{O}_{K_i}}&\qw                   &\multigate{2}{\tilde{O}^{\dagger}_{K_i}}&\qw     &\qswap &\multigate{4}{\mbox{C}_{\Pi}\mbox{NOT}}&\qw \\
    \lstick{|0\rangle}&{/^n}\qw&\ghost{\mbox{C}_{\Pi}\mbox{NOT}}&\gate{H}&\ghost{\tilde{O}_{K_i}}       &\qw                   &\ghost{\tilde{O}^{\dagger}_{K_i}}       &\gate{H}&\qswap\qwx&\ghost{\mbox{C}_{\Pi}\mbox{NOT}}&\qw\\
    \lstick{|0\rangle}&{/^s}\qw&\ghost{\mbox{C}_{\Pi}\mbox{NOT}}&\qw     &\ghost{\tilde{O}_{K_i}}       &\multigate{1}{\bm{CR}}&
    \ghost{\tilde{O}^{\dagger}_{K_i}}       &\qw     &\qw       &\ghost{\mbox{C}_{\Pi}\mbox{NOT}}&\qw\\
    \lstick{|0\rangle}&\qw  &\ghost{\mbox{C}_{\Pi}\mbox{NOT}}   &\qw     &\qw                           &\ghost{\bm{CR}}       &
    \qw                                     &\qw     &\qw    &\ghost{\mbox{C}_{\Pi}\mbox{NOT}}&\qw\\
    \lstick{|0\rangle}&\qw   &\ghost{\mbox{C}_{\Pi}\mbox{NOT}}  &\qw    &\qw                       &\qw       &
    \qw                                     &\qw     &\qw    &\ghost{\mbox{C}_{\Pi}\mbox{NOT}}&\qw\\
    }
    }
    \caption{\label{fig71}The block-encoding circuit $U_{K_i}$ of Gram matrix $K_i$, which outputs $\frac{K_i}{N}|\psi_{in}\rangle$ when the last $n+s+2$ qubits are measured onto $|0^{\otimes n+s+1},1\rangle$. $\bm{CR}$ represents the controlled rotation: $|a\rangle|0\rangle\to|a\rangle(a|0\rangle+\sqrt{1-a^2}|1\rangle)$.} 
\end{figure*}

\begin{figure}[ht]
    \centering
    \includegraphics[height=5.7cm,width=8.5cm]{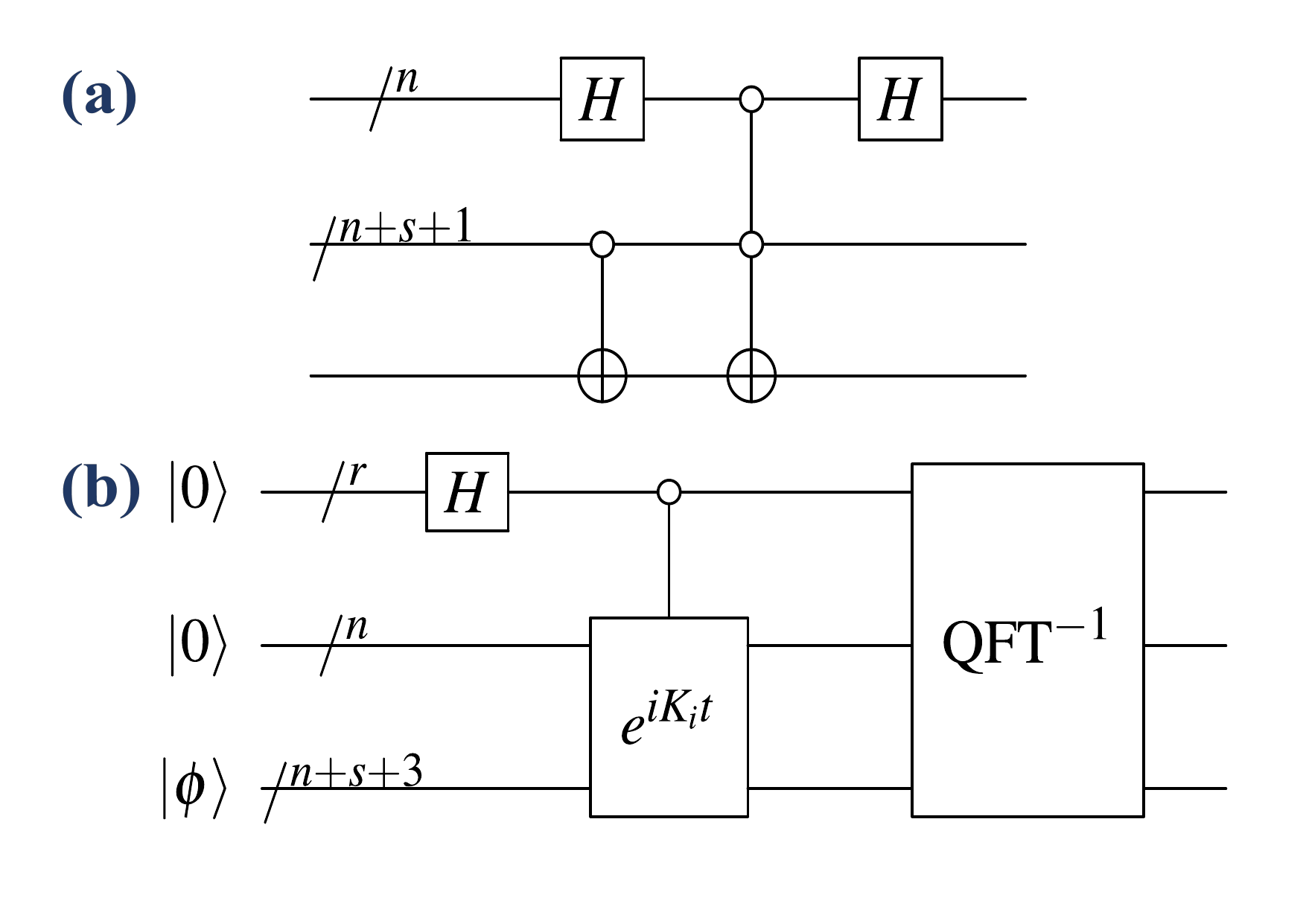}
    \captionsetup[subfloat]{labelsep=none, format=plain, labelformat=empty}
    \subfloat[\label{fig72a}]{}
    \subfloat[\label{fig72b}]{}
    \caption{\label{fig72}(a) Schematic of the circuit of $\mbox{C}_{\Pi}$NOT. (b) Schematic of the eigenvalue estimation circuit $U_i$. The controlled $e^{iK_it}$ represents the block encoded $e^{iK_it}$ with $t$ determined by $r$ control bits. The block named $\mbox{QFT}^{-1}$ is the inverse quantum Fourier transformation circuit. Measuring the first $r$ qubits yields the estimated phases.}
\end{figure}

\begin{figure*}[hbt]
    \centering
    \includegraphics[height=8.5cm,width=12cm]{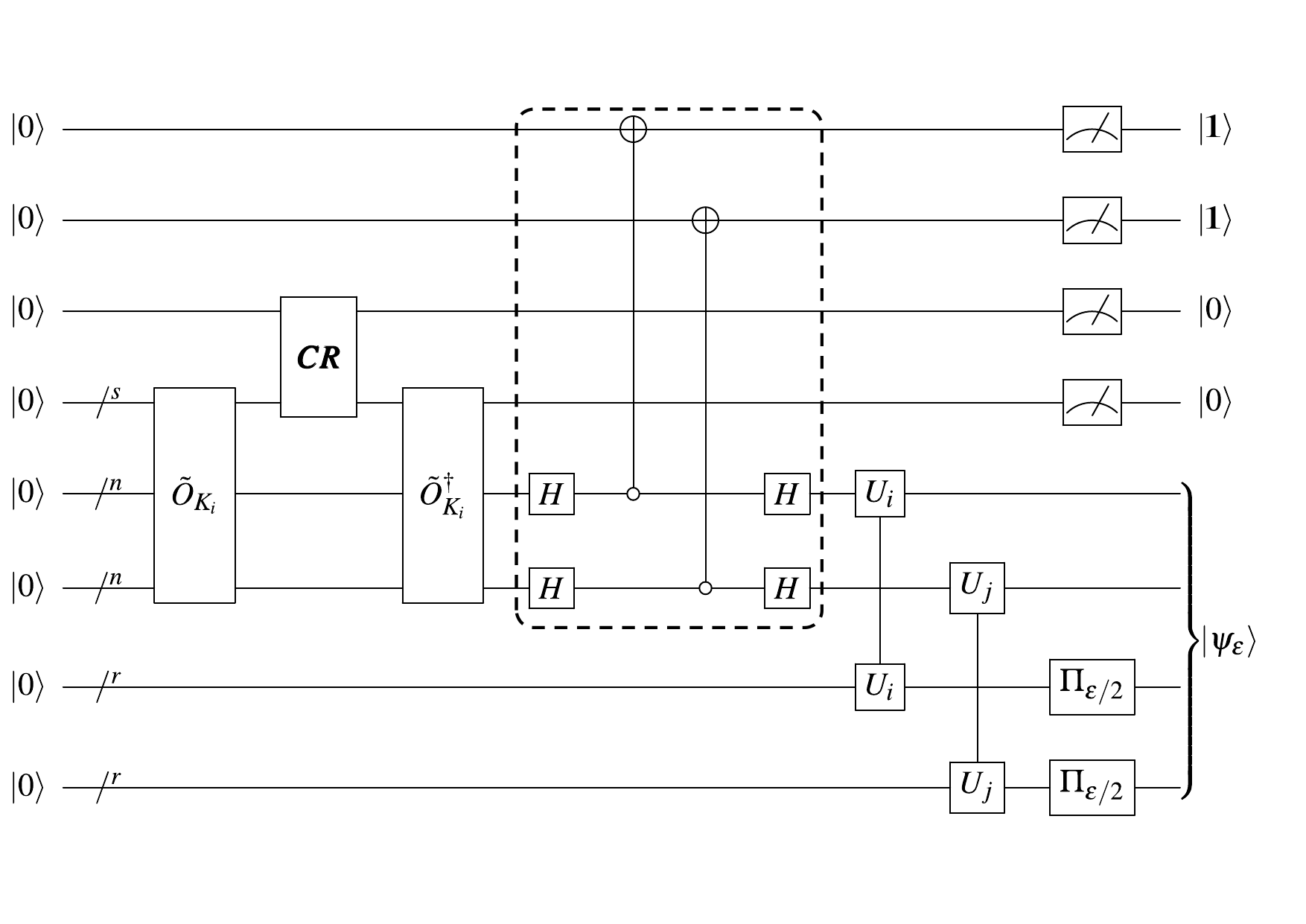}
    \caption{\label{fig2}Schematic of the circuit that prepares state $|\psi_{\epsilon}\rangle$, where $\tilde{O}_{K_i}$ is the block encoding of $K_i$ and $U_i$ is the eigenvalue estimation circuit. Gates enclosed by the dashed line are used for an amplitude encoding of $K_i$ from one of $\tilde{K}_i$.}
\end{figure*}

Now we give details of each subprocedure. Step 1 is to use quantum operations following the route below:
\begin{equation}
    \begin{split}
        |j,k,0,0,0\rangle&\to|j,k,z_{ij}, z_{ik},0\rangle\\
        &\to|j,k,z_{ij}, z_{ik},K(z_{ij}, z_{ik})\rangle\\
        &\to|j,k,0,0,K(z_{ij}, z_{ik})\rangle
    \end{split}
\end{equation}
where the 3 operations are respectively queries to $O_{z_i}$, quantum arithmetics and the inverse $O_{z_i}^{\dagger}$. This is formally stated in Lemma~\ref{lem4}, while proofs of lemmas in this section are left to Appendix.~\ref{app3}.
\begin{mylem}
    \label{lem4}
    The quantum circuit, $\tilde{O}_{K_i}:|j\rangle|k\rangle|z\rangle\to|j\rangle|k\rangle|z\oplus K(z_{ij},z_{ik})\rangle$, can be constructed using $O(\mbox{poly}\log(N/\epsilon))$ quantum gates and $2$ queries to both $O_{z_i}$ and $O^{\dagger}_{z_i}$.
\end{mylem}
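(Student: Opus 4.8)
The plan is to realize the map exactly along the three-stage route displayed above: load the two relevant sample entries into ancillas, evaluate the kernel reversibly into the target register, and then uncompute the ancillas. First I would use the oracle $O_{z_i}$ to populate two work registers with the needed sample values. Applying $O_{z_i}$ with the index register $|j\rangle$ writes $z_{ij}$ into the first ancilla, and a second application with $|k\rangle$ writes $z_{ik}$ into the second, producing $|j\rangle|k\rangle|z_{ij}\rangle|z_{ik}\rangle|z\rangle$ at the cost of exactly two queries to $O_{z_i}$. Here each value is stored as an $s$-bit binary string, and the eventual choice $s=O(\mathrm{poly}\log(N/\epsilon))$ is what keeps the later arithmetic cheap.

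Second, I would compute $K(z_{ij},z_{ik})$ into the output register by a sequence of reversible fixed-point arithmetic subroutines acting on the two loaded ancillas, and XOR the result onto $|z\rangle$. For the Gaussian kernel of Eq.~\eqref{eq5} this decomposes into a subtraction $z_{ij}-z_{ik}$, a squaring, a multiplication by the constant $-1/(2\sigma^2)$, and an evaluation of the exponential. Each primitive admits a reversible implementation using $O(\mathrm{poly}\log(1/\epsilon))$ elementary gates to additive accuracy $\epsilon$ \cite{QuanArith2017}; combined with $s=O(\mathrm{poly}\log(N/\epsilon))$, the total gate count remains $O(\mathrm{poly}\log(N/\epsilon))$, and crucially no additional oracle queries are consumed at this stage.

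Finally, I would uncompute the two ancilla registers by applying $O_{z_i}^{\dagger}$ controlled on $|j\rangle$ and then on $|k\rangle$, restoring them to $|0\rangle$ and leaving precisely $|j\rangle|k\rangle|z\oplus K(z_{ij},z_{ik})\rangle$; this accounts for the two queries to $O_{z_i}^{\dagger}$. I expect the only real obstacle to be the gate-complexity and error bookkeeping of the kernel evaluation rather than the query count, which is immediate from the load/uncompute structure. Specifically, one must verify that the nonlinear exponential can be approximated to precision $\epsilon$ with merely polylogarithmic overhead and that the additive errors of the successive fixed-point primitives compose without eroding the target accuracy; isolating this estimate is the technical heart of the argument, while the overall circuit structure is otherwise routine.
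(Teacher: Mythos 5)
Your proposal is correct and follows essentially the same route as the paper's own proof: two queries to $O_{z_i}$ to load $z_{ij}$ and $z_{ik}$, reversible quantum arithmetic (subtraction, squaring, scaling, exponentiation for the Gaussian kernel of Eq.~\eqref{eq5}) to XOR $K(z_{ij},z_{ik})$ onto the $s$-qubit target register at cost $O(\mbox{poly}\log(N/\epsilon))$ gates, and two queries to $O_{z_i}^{\dagger}$ to uncompute the ancillas. Your additional remarks on the decomposition of the kernel into arithmetic primitives and on error composition are finer-grained than the paper's treatment, which simply invokes quantum arithmetics with $s=\lceil\log_2(1/\epsilon)\rceil$ and precision $\epsilon$, but they do not change the argument.
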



The block-encoding of $K_i(Z)$ introduced in the second subprocedure is given in Lemma~\ref{lem4.5}.
\begin{mylem}
    \label{lem4.5}
    An $(N,n+s+1,\epsilon)$-block-encoding of $\tilde{K}_i(Z)$, $\tilde{U}_{K_i}$ is given by
    \begin{equation}
        \begin{split}
        \tilde{U}_{K_i}=&(\textbf{SWAP}_n\otimes I_{s+1})(I_n\otimes H^{\otimes n}\otimes I_{s+1})(\tilde{O}_{K_i}^{\dagger}\otimes I_{1})\\
        &(I_{2n}\otimes\textbf{CR})(\tilde{O}_{K_i}\otimes I_{1})(I_n\otimes H^{\otimes n}\otimes I_{s+1}),
        \end{split}
    \end{equation}
    where $I_n$ and $\textbf{SWAP}_n$ respectively represent the $n$-qubit identity operator and swap gate, and we define $\textbf{CR}:|a\rangle|0\rangle\to|a\rangle(a|0\rangle+\sqrt{1-a^2}|1\rangle)$. An $(N,n+s+2,\epsilon)$-block-encoding of $K_i(Z)$ is then given by $U_{K_i}=\mbox{C}_{\Pi}\mbox{NOT}\cdot \tilde{U}_{K_i}\cdot \mbox{C}_{\Pi}\mbox{NOT}$, using $O(\mbox{poly}\log(N/\epsilon))$ quantum gates and one query to $\tilde{O}_{K_i}$ and its inverse. We define
    \begin{equation}
        \Pi=(I_n-|\vec{1}\rangle\langle\vec{1}|)\otimes|0^{\otimes n+s+1}\rangle\langle0^{\otimes n+s+1}|,
    \end{equation}
    for the uniform superposition state denoted by $|\vec{1}\rangle$, and
    \begin{equation}
        \mbox{C}_{\Pi}\mbox{NOT}=\Pi\otimes X+(I-\Pi)\otimes I.
    \end{equation}
    They give
    \begin{equation}
        \langle j,0^{\otimes n+s+1},1|U_{K_i}|k,0^{\otimes n+s+1},1\rangle=\frac{(K_i)_{jk}}{N}.
    \end{equation}

\end{mylem}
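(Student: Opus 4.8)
The plan is to verify the two block-encoding claims in turn and then account for the gate/query count and the approximation error. First I would establish that $\tilde{U}_{K_i}$ is an $(N,n+s+1,\epsilon)$-block-encoding of the uncentered Gram matrix $\tilde{K}_i(Z)$ by propagating a basis input $|k\rangle|0^{\otimes n}\rangle|0^{\otimes s}\rangle|0\rangle$ through its five factors. The leading $H^{\otimes n}$ prepares the uniform superposition $\frac{1}{\sqrt{N}}\sum_l|l\rangle$ on the ancillary index register; the query to $\tilde{O}_{K_i}$ (Lemma~\ref{lem4}) writes $(\tilde{K}_i)_{kl}=K(z_{ik},z_{il})$ into the value register; the controlled rotation $\textbf{CR}$ moves this value into the flag amplitude as $(\tilde{K}_i)_{kl}|0\rangle+\sqrt{1-(\tilde{K}_i)_{kl}^2}|1\rangle$, which is well defined since the Gaussian kernel obeys $0<(\tilde{K}_i)_{kl}\le 1$; and $\tilde{O}^{\dagger}_{K_i}$ disentangles the value register. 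The remaining $H^{\otimes n}$ together with $\textbf{SWAP}_n$ convert the column label into a uniform overlap, so that projecting the $n+s+1$ ancillas onto $|0^{\otimes n+s+1}\rangle$ yields $\langle j,0^{\otimes n+s+1}|\tilde{U}_{K_i}|k,0^{\otimes n+s+1}\rangle=(\tilde{K}_i)_{jk}/N$: the two factors of $1/\sqrt{N}$ from the Hadamards supply the normalization $N$, and the symmetry $K(x,y)=K(y,x)$ lets the row and column labels be exchanged freely.

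For the centered matrix I would analyze the conjugation $U_{K_i}=\mbox{C}_{\Pi}\mbox{NOT}\cdot\tilde{U}_{K_i}\cdot\mbox{C}_{\Pi}\mbox{NOT}$, using $\bm{1}/N=|\vec{1}\rangle\langle\vec{1}|$ so that the centering operation of Eq.~(\ref{eq14}) is exactly conjugation by the projector $P=I_n-|\vec{1}\rangle\langle\vec{1}|$. Acting on the flagged input $|k\rangle|0^{\otimes n+s+1}\rangle|1\rangle$, the right $\mbox{C}_{\Pi}\mbox{NOT}$ splits the system register into its uniform-orthogonal part $P|k\rangle$ (tagged by the extra qubit value $|0\rangle$) and its uniform-parallel remainder; running $\tilde{U}_{K_i}$ applies $\tilde{K}_i/N$ inside the ancilla-$|0\rangle$ block; and the left $\mbox{C}_{\Pi}\mbox{NOT}$ re-applies $P$ on the output side while restoring the extra qubit to $|1\rangle$. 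Projecting onto $\langle j|\langle 0^{\otimes n+s+1}|\langle 1|$ then isolates $\frac{1}{N}\langle j|P\tilde{K}_iP|k\rangle=(K_i)_{jk}/N$, the claimed identity. I would implement $\mbox{C}_{\Pi}\mbox{NOT}$ by conjugating the system register with $H^{\otimes n}$ and firing a multi-controlled $X$ on the flag qubit precisely when the Hadamard-rotated system register is nonzero and the block-encoding ancillas are all $|0\rangle$, at a cost of $O(n+s)=O(\mbox{poly}\log N)$ elementary gates.

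The bookkeeping then closes quickly: $\tilde{U}_{K_i}$ and the two $\mbox{C}_{\Pi}\mbox{NOT}$'s each use $O(\mbox{poly}\log(N/\epsilon))$ elementary gates and a single use of $\tilde{O}_{K_i}$ and its inverse, hence (by Lemma~\ref{lem4}) a constant number of queries to $O_{z_i}$ and $O^{\dagger}_{z_i}$, giving the stated complexity. For the error I would carry $\epsilon$ through the only inexact primitive, the controlled rotation: representing the kernel value to $s$ bits and synthesizing the angle $\arccos(\cdot)$ to accuracy $\epsilon$ perturbs each encoded amplitude by $O(\epsilon)$, and since block-encodings of norm-bounded matrices compose with additive error, the conjugation preserves the $(N,\cdot,\epsilon)$ guarantee up to constant factors.

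I expect the centering step to be the main obstacle. One must check carefully that the uniform-parallel components generated by the first $\mbox{C}_{\Pi}\mbox{NOT}$ and by $\tilde{K}_i|\vec{1}\rangle$ are routed out of the projected block — into the extra-qubit value $|0\rangle$ or into ancilla components orthogonal to $|0^{\otimes n+s+1}\rangle$ — so that only the doubly-projected term $P\tilde{K}_iP$ survives in the flagged block. Getting the reflection-about-$|\vec{1}\rangle$ bookkeeping right, and confirming that no residual $O(1/N)$ cross term leaks into the encoded matrix element, is the delicate part of the argument; by contrast, the block-encoding of $\tilde{K}_i$ and the gate counts are routine state-tracing and circuit-synthesis estimates.
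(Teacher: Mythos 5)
Your proposal is correct and follows essentially the same route as the paper's proof: forward state-tracing through the five factors (Hadamard superposition, query to $\tilde{O}_{K_i}$, $\textbf{CR}$ into the flag amplitude, uncomputation, $\textbf{SWAP}$ plus the second Hadamard layer) yielding $\langle j,0^{\otimes n+s+1}|\tilde{U}_{K_i}|k,0^{\otimes n+s+1}\rangle=(\tilde{K}_i)_{jk}/N$ via kernel symmetry, then centering through $K_i=(I_n-|\vec{1}\rangle\langle\vec{1}|)\tilde{K}_i(I_n-|\vec{1}\rangle\langle\vec{1}|)$ and the $\mbox{C}_{\Pi}\mbox{NOT}$ conjugation, with identical gate and query accounting. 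Your closing caveat is in fact slightly more careful than the paper, whose proof only establishes $\Pi\tilde{U}_{K_i}\Pi=\frac{K_i}{N}\otimes|0^{\otimes n+s+1}\rangle\langle0^{\otimes n+s+1}|$ and silently ignores the residual $(I-\Pi)\tilde{U}_{K_i}(I-\Pi)$ contribution supported on $|\vec{1}\rangle$, which indeed survives in the flagged block but is harmless for all downstream uses since they act only in the subspace orthogonal to $|\vec{1}\rangle$.
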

The block-encoding circuit is illustrated in Fig.~\ref{fig71}, with $\mbox{C}_{\Pi}\mbox{NOT}$ shown in Fig.~\ref{fig72a}. The Quantum singular value transformation (QSVT) technique \cite{QSVT2019} then allows the block-encoding of the evolution operator $e^{iK_it/N}$. A phase estimation of $e^{iK_it/N}$ gives the eigenvalue estimation circuit $U_i$ introduced at the beginning of this section, which is illustrated in Fig.~\ref{fig72b} and analyzed in Lemma~\ref{lem4.75}.
\begin{mylem}
    \label{lem4.75}
    Given $0<\epsilon<1$, the eigenvalue estimation circuit $U_i$:
    \begin{equation}
        |u_{ik}\rangle|0\rangle\to|u_{ik}\rangle|\frac{\tilde{\lambda}_{ik}}{N}\rangle,
    \end{equation}
    can be implemented using $\tilde{O}(\frac{1}{\epsilon})$ queries to $U_{K_i}$ and elementary gates, where $|\lambda_{ik}-\tilde{\lambda}_{ik}|\leq N\epsilon$.
\end{mylem}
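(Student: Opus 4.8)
The plan is to build the eigenvalue estimator as a textbook phase estimation applied to the Hamiltonian evolution generated by the block-encoded Gram matrix. The starting point is Lemma~\ref{lem4.5}, which supplies $U_{K_i}$ as an $(N,n+s+2,\epsilon)$-block-encoding of $K_i(Z)$; equivalently its top-left block is the normalized operator $K_i/N$, whose eigenpairs are $(\lambda_{ik}/N,\vec u_{ik})$ with $\lambda_{ik}/N\in[0,1]$ since $K_i$ is positive semidefinite. Because this normalized operator has spectral norm at most $1$, I would use the QSVT Hamiltonian-simulation construction \cite{QSVT2019} to turn $U_{K_i}$ into an (approximate) block-encoding of the unitary $e^{iK_i\tau/N}$ for a chosen evolution time $\tau$. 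The key structural fact is that $\vec u_{ik}$ is an eigenvector of this unitary with eigenphase $\lambda_{ik}\tau/N$, so estimating that phase directly yields $\lambda_{ik}/N$.

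First I would fix $\tau=\Theta(1)$ (small enough that $\lambda_{ik}\tau/N\in[0,2\pi)$, avoiding any wraparound ambiguity) and recall the cost of QSVT simulation: an $\delta$-accurate block-encoding of $e^{iK_i\tau/N}$ costs $O(\tau+\log(1/\delta))$ queries to $U_{K_i}$. Then I would run phase estimation on $|u_{ik}\rangle|0\rangle$ with $r$ ancilla qubits, implementing the controlled powers $e^{iK_i\tau 2^j/N}$ for $j=0,\dots,r-1$ from the simulation primitive. Each such power is itself a Hamiltonian simulation of time $\tau 2^j$, so the total simulated time is $\tau\sum_{j=0}^{r-1}2^j=\tau(2^r-1)$, and the number of queries to $U_{K_i}$ is $\tilde O(\tau 2^r)$.

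To obtain the claimed precision I would choose $r=O(\log(1/\epsilon))$, so that phase estimation resolves $\lambda_{ik}\tau/N$ to additive error $O(\epsilon\tau)$ and hence resolves $\lambda_{ik}/N$ to additive error $\epsilon$. This is exactly the stated guarantee $|\lambda_{ik}-\tilde\lambda_{ik}|\le N\epsilon$, while the total query count becomes $\tilde O(\tau 2^r)=\tilde O(1/\epsilon)$ as required, with the inverse-quantum-Fourier-transform step of Fig.~\ref{fig72b} producing the register $|\tilde\lambda_{ik}/N\rangle$. By linearity the same circuit acts coherently on superpositions of eigenvectors, so it suffices to verify the action on each $|u_{ik}\rangle$.

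The hard part will be the error budgeting rather than the circuit design. Three distinct error sources must be combined and each driven below the tolerance within only polylogarithmic overhead: the block-encoding error $\epsilon$ inherited from Lemma~\ref{lem4.5}, the QSVT truncation error $\delta$ in approximating $e^{iK_i\tau/N}$, and the intrinsic discretization and failure probability of phase estimation. I would set $\delta=\mathrm{poly}(\epsilon)$ so that the simulation error contributes only a $\log(1/\delta)=O(\log(1/\epsilon))$ additive term to the query count, and boost the phase-estimation success probability by the standard median-of-repetitions trick, which multiplies the cost by a further $O(\log(1/\eta))$ factor for failure probability $\eta$. All of these factors are absorbed into the $\tilde O$ notation, leaving the leading dependence $\tilde O(1/\epsilon)$ intact; confirming that the accumulated additive errors on $\lambda_{ik}/N$ indeed stay within $\epsilon$ is the only genuinely delicate bookkeeping, and is the step I would carry out most carefully.
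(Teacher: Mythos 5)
Your proposal is correct and takes essentially the same route as the paper: use QSVT \cite{QSVT2019} on the block-encoding $U_{K_i}$ from Lemma~\ref{lem4.5} to (approximately) implement $e^{iK_it/N}$, then run standard phase estimation, whose $\tilde{O}(1/\epsilon)$ controlled-evolution cost dominates the query and gate complexity. The only difference is presentational: you spell out the error budget (simulation error $\delta$, choice of $r=O(\log(1/\epsilon))$ ancillas, median amplification), whereas the paper compresses this into the observation that the block-encoding/simulation errors carry only $\mbox{poly}\log(N/\epsilon)$ overhead and may be absorbed into the $\tilde{O}$ notation.
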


The circuit of the third subprocedure is illustrated in Fig.~\ref{fig2}, where two eigenvalue estimation circuits are implemented. Projections $\Pi_{\epsilon/2}$ are included to eliminate terms corresponding to $\tilde{\lambda}_{ik}/N<\epsilon/2$. The output is denoted by $|\psi_{\epsilon}\rangle$, as stated formally in Lemma~\ref{lem5}.
\begin{mylem}
    \label{lem5}
    Given $\epsilon>0$, the preparation of either:
    \begin{equation}
        |\phi_{\epsilon}\rangle=\sum_{k=1}^{M_i}|u_{ik}\rangle|\frac{\tilde{\lambda}_{ik}}{N}\rangle,
        \label{eq28.9}
    \end{equation}
    or 
    \begin{equation}
        |\psi_{\epsilon}\rangle=\sum_{k=1}^{M_i}\sum_{l=1}^{M_j}\frac{\lambda_{ik}}{N}\langle \vec{u}_{jl},\vec{u}_{ik}\rangle|u_{ik}\rangle|u_{jl}\rangle|\frac{\tilde{\lambda}_{ik}}{N}\rangle|\frac{\tilde{\lambda}_{jl}}{N}\rangle,
        \label{eq29}
    \end{equation}
    can be realized within $\tilde{O}(\frac{1}{\epsilon})$ elementary gates and one query to $U_{i}$ and $U_{j}$, where $|\tilde{\lambda}_{ik}-\lambda_{ik}|<N\epsilon$. $M_i,M_j=O(\log N)$ is the number of kept eigenvalues of $K_i, K_j$ after the projection $\Pi_{\epsilon/2}$. 
\end{mylem}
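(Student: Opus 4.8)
The plan is to obtain both states by feeding an appropriately chosen initial state into the eigenvalue-estimation circuits of Lemma~\ref{lem4.75} and then truncating the small part of the spectrum with $\Pi_{\epsilon/2}$. The one algebraic fact that drives everything is that the spectral decomposition $K_i(Z)=\sum_k\lambda_{ik}\vec{u}_{ik}\vec{u}_{ik}^T$ lifts to the two-register identity
\[
|K_i(Z)\rangle=\frac{1}{N}\sum_k\lambda_{ik}|u_{ik}\rangle|u_{ik}\rangle ,
\]
together with the basis-independence of the (unnormalized) maximally entangled state for the real orthonormal eigenbasis,
\[
\frac{1}{\sqrt{N}}\sum_{\alpha=1}^N|\alpha\rangle|\alpha\rangle=\frac{1}{\sqrt{N}}\sum_k|u_{ik}\rangle|u_{ik}\rangle .
\]

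First I would construct $|\phi_\epsilon\rangle$ of Eq.~\eqref{eq28.9}. Prepare the maximally entangled state above with $n$ Hadamards and $n$ CNOTs, apply $U_i$ to the first register and a fresh phase ancilla to reach $\frac{1}{\sqrt N}\sum_k|u_{ik}\rangle|u_{ik}\rangle|\tilde\lambda_{ik}/N\rangle$, and then apply $\Pi_{\epsilon/2}$ to the eigenvalue ancilla, discarding every branch whose estimated eigenvalue falls below $\epsilon/2$. Because the Gaussian-kernel spectrum decays rapidly, only $M_i=O(\log N)$ branches survive, giving exactly $|\phi_\epsilon\rangle$ (the statement suppresses the second copy register and the overall normalization). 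The bound $|\tilde\lambda_{ik}-\lambda_{ik}|<N\epsilon$ is inherited verbatim from Lemma~\ref{lem4.75}, and the cost is one query to $U_i$ plus $\tilde O(1/\epsilon)$ elementary gates.

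For $|\psi_\epsilon\rangle$ of Eq.~\eqref{eq29} I would instead amplitude-encode $|K_i(Z)\rangle$ into the two system registers using the block-encoding of Lemma~\ref{lem4.5} (the gates inside the dashed box of Fig.~\ref{fig2}), so that the flagged component of the output carries the amplitudes $\lambda_{ik}/N$. Applying $U_i$ to the first register attaches $\tilde\lambda_{ik}/N$ to $|u_{ik}\rangle$, while applying $U_j$ to the second register first requires re-expressing the $K_i$-eigenvector in the $K_j$-eigenbasis, $|u_{ik}\rangle=\sum_l\langle\vec{u}_{jl},\vec{u}_{ik}\rangle|u_{jl}\rangle$, after which $U_j$ outputs $\sum_l\langle\vec{u}_{jl},\vec{u}_{ik}\rangle|u_{jl}\rangle|\tilde\lambda_{jl}/N\rangle$. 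Collecting the registers reproduces the amplitude $\frac{\lambda_{ik}}{N}\langle\vec{u}_{jl},\vec{u}_{ik}\rangle$ of Eq.~\eqref{eq29}, and $\Pi_{\epsilon/2}$ on both eigenvalue ancillas truncates the double sum to $k\le M_i,\,l\le M_j$ with $M_i,M_j=O(\log N)$. The circuit uses one query to each of $U_i$ and $U_j$, a constant number of queries to $\tilde O_{K_i}$ for the encoding, and $\tilde O(1/\epsilon)$ elementary gates.

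The step I expect to be the main obstacle is the coherent, simultaneous action of $U_i$ and $U_j$ on the two halves of the same encoded state. I would check that the change of basis for the second register is \emph{exact}, which uses that the eigenvectors are real (so the overlaps $\langle\vec{u}_{jl},\vec{u}_{ik}\rangle$ are real and the maximally entangled state is basis-invariant) and that $U_j$ acts linearly on the superposition $\sum_l\langle\vec{u}_{jl},\vec{u}_{ik}\rangle|u_{jl}\rangle$ to produce the correct attached eigenvalue labels. Since the two phase-estimation ancillas live on disjoint registers, $U_i$ and $U_j$ commute and introduce no cross terms, so the surviving component is precisely $|\psi_\epsilon\rangle$; I would finally confirm that the spectral gap of the Gaussian kernel makes the $\Pi_{\epsilon/2}$ truncation clean, so that the kept indices are exactly the $M_i$ (resp.\ $M_j$) leading eigenvalues. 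The remaining bookkeeping---normalization and propagation of the $N\epsilon$ eigenvalue error through the projector---is routine.
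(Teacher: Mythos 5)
Your construction of $|\psi_{\epsilon}\rangle$ coincides with the paper's proof: the paper also amplitude-encodes the centered Gram-matrix state $|K_i\rangle=\frac{1}{N}\sum_k\lambda_{ik}|u_{ik}\rangle|u_{ik}\rangle$ (the dashed box of Fig.~\ref{fig2}), applies $U_i^{13}$ and $U_j^{24}$ in parallel, expands the second register in the $K_j$-eigenbasis to produce the cross amplitudes $\frac{\lambda_{ik}}{N}\langle\vec{u}_{jl},\vec{u}_{ik}\rangle$, and projects both eigenvalue ancillas with $\Pi_{\epsilon/2}$; your commutation and real-eigenbasis remarks are the right checks, and choosing phase-estimation precision $\epsilon/2$ so that eigenvalues above $\epsilon$ are never discarded is exactly the paper's final bookkeeping step.

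The genuine gap is in your route to $|\phi_{\epsilon}\rangle$. The paper prepares it from the \emph{same} state $|K_i\rangle$, so its proof yields $|\phi_{\epsilon}\rangle=\Pi_{\epsilon/2}^{3}U_i^{13}|K_i\rangle_{12}|0\rangle_3=\sum_k\frac{\lambda_{ik}}{N}|u_{ik}\rangle|u_{ik}\rangle|\tilde{\lambda}_{ik}/N\rangle$, i.e.\ the amplitudes carry the eigenvalue weights $\lambda_{ik}/N$ (Eq.~\ref{eq28.9} suppresses them, but Algorithm~\ref{alg1} measures the last register of this weighted state to sample $\lambda_{ik}/N$). Starting instead from the maximally entangled state $\frac{1}{\sqrt N}\sum_\alpha|\alpha\rangle|\alpha\rangle$ gives every eigenvector branch amplitude $1/\sqrt N$, so after $\Pi_{\epsilon/2}$ the surviving squared norm is $M_i/N=O(\log N/N)$ --- exponentially small in the qubit number $n=\log N$. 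Post-selecting or amplitude-amplifying this costs $\tilde{O}(\sqrt{N/\log N})$ repetitions, which destroys the claimed $\tilde{O}(1/\epsilon)$, $\mbox{poly}\log N$ complexity; the weighting is therefore essential, not cosmetic. With the paper's encoding the projection succeeds with $N$-independent probability: the discarded branches carry squared weight at most $\sum_{\lambda_{ik}/N<\epsilon/2}(\lambda_{ik}/N)^2\leq\frac{\epsilon}{2}\cdot\frac{\mbox{Tr}(K_i)}{N}=O(\epsilon)$, while the top eigenvalue of $K_i/N$ is $\Theta(1)$ by the Nystr\"om analysis of Appendix~\ref{app4.1}, so the kept mass is $\Theta(1)-O(\epsilon)$. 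Replace your initial state by the encoded $|K_i\rangle$ (one use of the Lemma~\ref{lem4.5} machinery, as in your $|\psi_\epsilon\rangle$ branch) and the proof closes.
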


Denote by $O_{\epsilon}$ the quantum circuit preparing $|\psi_{\epsilon}\rangle$, with which repetitive executions of phase estimation or amplitude estimation yields the value of $\lambda_{ik}$ and $\lambda_{ik}\langle|\vec{u}_{jl},\vec{u}_{ik}\rangle|$. Thus $R_{\kappa}$ and its determinant are obtained according to Eq.~\ref{eq13}. Selecting carefully the estimation precision and adding the preprocessing subprocedure for replacement of $\bm{z}$ by $\bm{Wy}$ give the overall process estimating the contrast function with the complexity of $\tilde{O}(1/\epsilon^2_1)$, as we stated in Theorem.~\ref{the1} and~\ref{the2}. Their proofs are also in Appendix~\ref{app3}.

\subsection{\label{sec3.4}Optimization}


The minimization of the function with its argument being an orthogonal matrix is done by techniques for optimization on a Stiefel manifold as in \cite{OptiStieMani1998,KICA2002}. The remaining questions are how to utilize the difference between two theorems and whether the optimization runs successfully. 

Theorem~\ref{the1} allows us to first speculate the global property of $J(\bm{W}Y)$ when varying $\bm{W}$, and once $J(\bm{W}Y)$ performs like that $\bm{W}\bm{y}$ is near-independent, we can try higher accuracy without introducing much more complexity using Theorem~\ref{the2} in view of the $\epsilon_2^{-1}$ dependence. In the near-independent case, as analyzed in Appendix~\ref{app4}, $R_{\kappa}$ is an identity with perturbation of magnitude $\epsilon_2$ so that $J(\bm{W}Y)\approx1-\zeta\epsilon_2$ where $\zeta$ is some constant dependent on $F$. Therefore the $\epsilon_1\epsilon_2$ precision can always distinguish $\epsilon_2=a$ or $\epsilon_2=\frac{\zeta+\epsilon_1}{\zeta-\epsilon_1}a$ which allows the optimization to proceed.

Regarding the influence of the input error induced by preprocessing, we have
\begin{equation}
    \begin{split}
    \tilde{M}^{-1/2}&=(I+(\tilde{M}^{-1/2}-M^{-1/2})M^{1/2})M^{-1/2}\\
    &=(I+\epsilon_2E\sqrt{\|M\|_2})M^{-1/2}+o(\epsilon_2^2)
    \end{split}
\end{equation}
for some $E$ with $\|E\|_2<1$ in the limit as $\epsilon_2$ approaches 0. Using a first-order approximation with respect to $\epsilon_2$, $\tilde{M}^{-1/2}$ can be viewed as the product of matrix $M^{-1/2}$ and an orthogonal $U_E=I+\epsilon_2E\sqrt{\|M\|_2}$ since $M$ is symmetric and $E$ as well. Therefore, the optimization result only differs from the theoretically optimal solution by an orthogonal transformation, namely $\bm{W}=\tilde{\bm{W}}U_E^{-1}$. Otherwise, large $\epsilon_2$ may lead to that $\tilde{M}^{-1/2}$ is not orthogonal at all and meaningless results.

The optimization is further restricted by two factors when $\epsilon_2$ approaches $1/\sqrt{N}$. One is that the statistical error plays an important role since the samples are in general not perfect and the other is the higher demand for the accuracy for distinguishability.


\section{\label{sec4.0}Numerical experiments}
We present numerical tests for assessment of the capability of our algorithm to separate the ICs and to support Theorem~\ref{the1}. In the first half, the raw data is sampled by the given PDF, as illustrated in Fig.~\ref{fig3}, and a biological dataset is used later to introduce examples of applications.

\begin{figure}[ht]
    \centering

    \includegraphics[height=2.8cm,width=8cm]{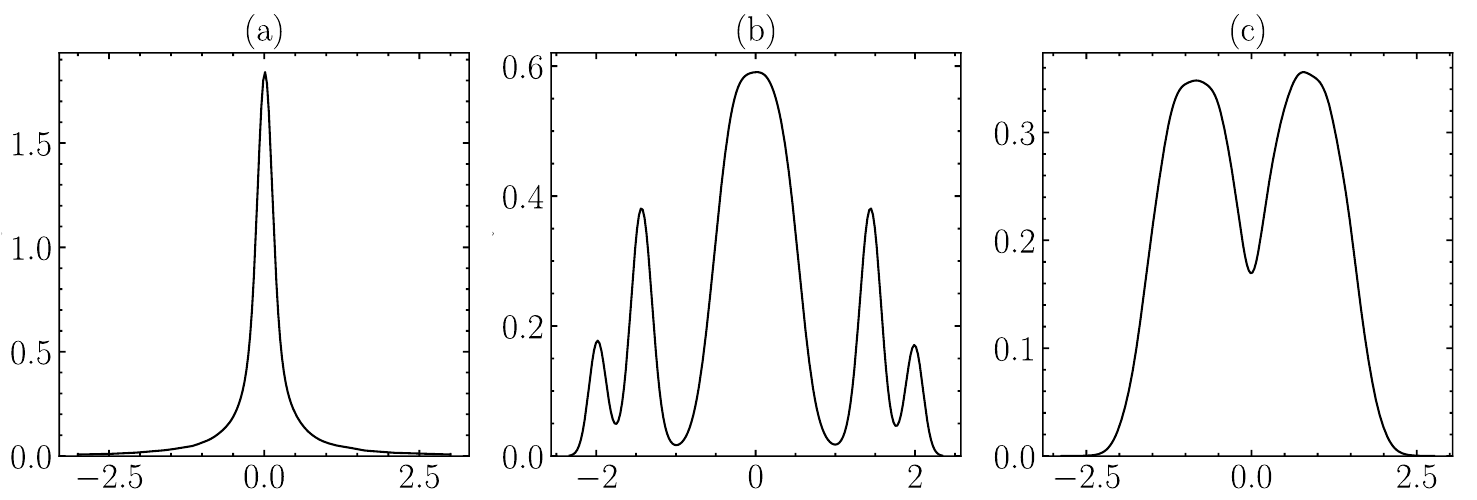}
    \captionsetup[subfloat]{labelsep=none, format=plain, labelformat=empty}
    \subfloat[\label{fig3a}]{}
    \subfloat[\label{fig3b}]{}
    \subfloat[\label{fig3c}]{}
    \caption{\label{fig3}Probability density functions of partial ICs.}
\end{figure}

In the first stage, data sampled from a given PDF are seen as samples of an IC $s_i$. After a linear transformation $\bm{A}$, mixtures $\bm{A}S$ are sent for computing the contrast function so that it is convenient to observe its performance as $\bm{A}$ varies. Fig.~\ref{fig4a} reveals the performance of the contrast function used in the original KICA algorithm, while Fig.~\ref{fig4b},~\ref{fig4c} and~\ref{fig4d} illustrate cases using the adapted one with measuring error, $\epsilon_1=0, 2, 4\times10^{-3}$, respectively. Contrast functions can always approach their minima in our test. We find the adaptation of the contrast function does not appear to destroy the effectiveness after comparing these results.

\begin{figure}[ht]
    \centering
    \includegraphics[height=6.9cm,width=8.5cm]{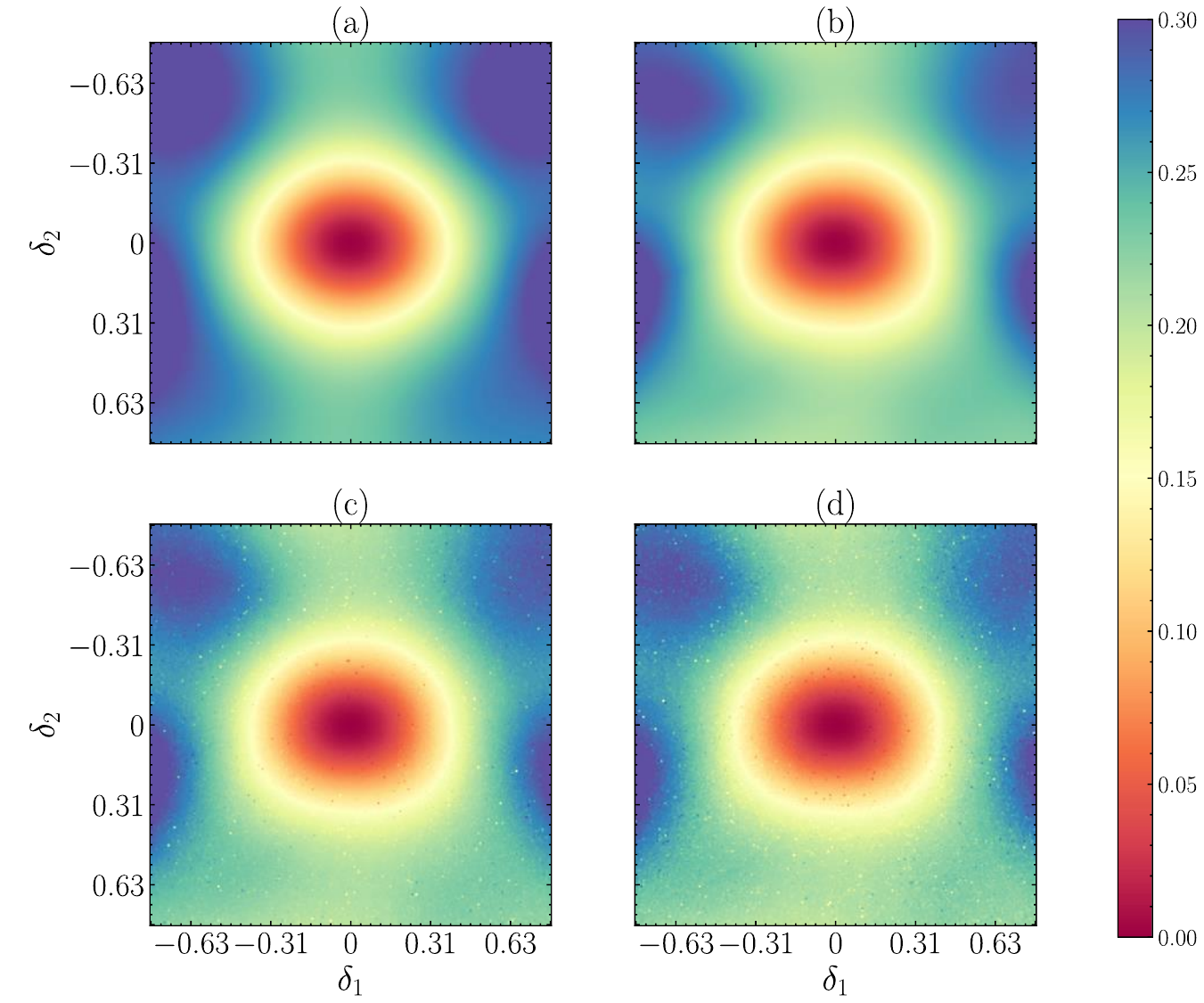}
    \captionsetup[subfloat]{labelsep=none, format=plain, labelformat=empty}
    \subfloat[\label{fig4a}]{}
    \subfloat[\label{fig4b}]{}
    \subfloat[\label{fig4c}]{}
    \subfloat[\label{fig4d}]{}
    \caption{\label{fig4}Values of $-\ln(\det(R_{\kappa}))$ represented by colors when varying the mixing coefficients, $\exp\{i(\delta_1P_1+\delta_2P_2)\}$. $P_1, P_2$ are two generators, and $\delta_1,\delta_2$ are parameters reflecting the deviation of the mixtures from the original ICs. (a) $-\ln(\det(R_{\kappa}))$ computed with the correct signs of $\langle\vec{u}_{ik},\vec{u}_{jl}\rangle$; (b), (c), (d) $-\ln\det(R_{\kappa})$ computed with measuring error $\epsilon_1=0, 2, 4\times10^{-3}$ respectively, signs of entries of $R_{\kappa}$ neglected.}
\end{figure}

Vary sample size and mixing coefficients at the same time, then we compute norms of states of $|K_i\rangle$ and $|\psi_{\epsilon}\rangle$, results given in Fig.~\ref{fig5a} and ~\ref{fig5b}. This reveals that only the extent of mixing is dominant for the success rate problem, and the probability problem occurs mainly when the mixture tends to be unmixed.

The relation between $\big\||\psi_{\epsilon}\rangle\big\|$ and sample size $N$ is further shown in Fig.~\ref{fig6a}. We set mixing coefficients to be $\exp\{i\delta P\}$, a fixed $\delta$ for each curve. It is found that in double logarithmic coordinates, the slope of each curve increases gradually from 1/2 to 1 as $N$ goes up except in the case that $\delta=0$, where the slope stays at 1/2. 
Our fitted curves agree with our expectation that $\big\||\psi_{\epsilon}\rangle\big\|=O(\epsilon_2+1/\sqrt{N})$, while in this case $\epsilon_2$ can be identified with $\delta$. 

Fix the mixing coefficients to $\exp\{i(0.1\pi P)\}$, we illustrate the relation between the error of $\det(R_{\kappa})$ and $\epsilon_1$ as $N$ increases in Fig.~\ref{fig6b}. Because of the significant effect of randomness, the fit has a certain variance but it is enough to show the error increases linearly with $\epsilon_1$ but hardly with $N$. 
\begin{figure}[ht]
    \centering
    \captionsetup[subfloat]{labelsep=none, format=plain, labelformat=empty}
    \subfloat[\label{fig5a}]{}
    \subfloat[\label{fig5b}]{}
    \includegraphics[height=4cm,width=8cm]{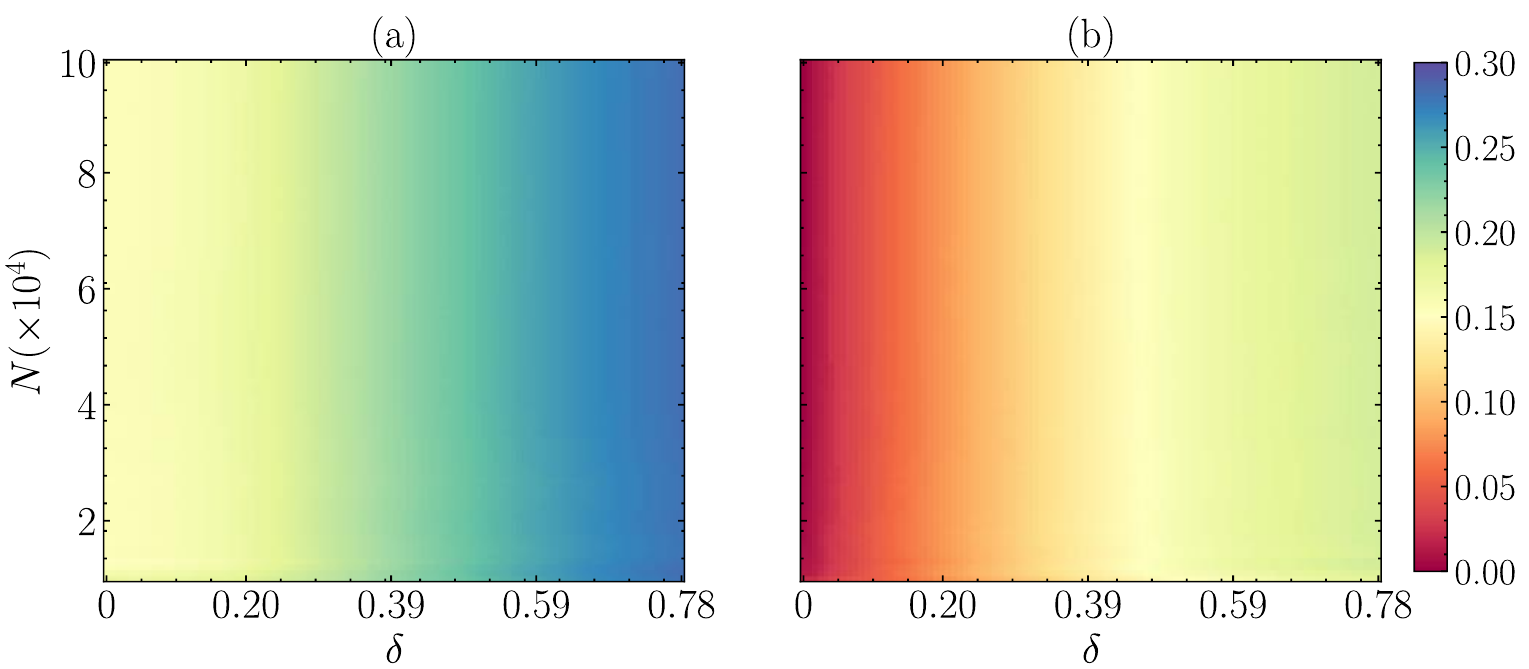}
    \caption{\label{fig5} Represented by colors, norms of different quantum states computed with mixing coefficients $\exp\{i\delta P\}$ as $N$ and $\delta$ increase. (a) Norms of $|K_i\rangle$; (b) Norms of $|\psi_{\epsilon}\rangle$.}
\end{figure}

\begin{figure}[ht]
    \centering
    \subfloat[\label{fig6a}]{
        \includegraphics[height=3.6cm,width=4cm]{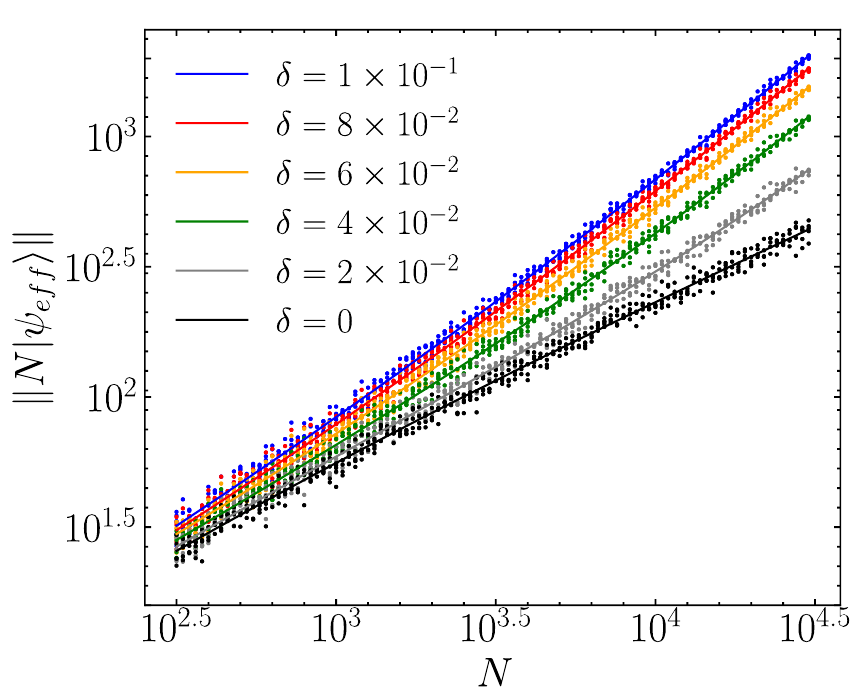}
    }
    \subfloat[\label{fig6b}]{
        \includegraphics[height=3.5cm,width=4cm]{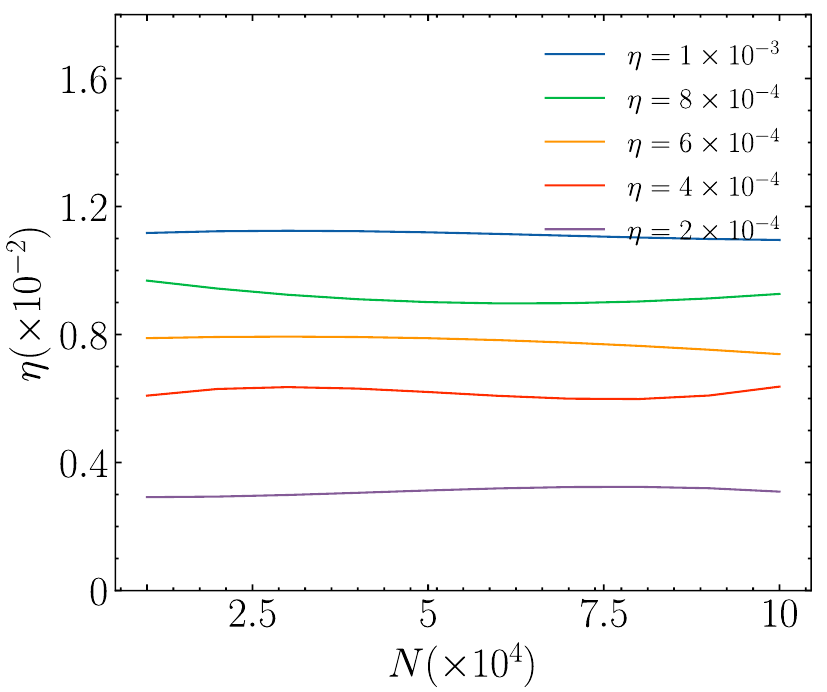}
    }
    
    \caption{\label{fig6} (a) The norm of $|\psi_{\epsilon}\rangle$ computed with mixing coefficients $\exp\{i\delta P\}$ as sample size $N$ and $\delta$ vary and $P$ is one of the generators. (b) Relative error of $\det(R_{\kappa})$, $\eta=|(\det(R_{\kappa})-\det(\tilde{R}_{\kappa}))/\det(R_{\kappa})|$, as a function of $N$ when introducing measuring error to be $\epsilon$.}
\end{figure}

\begin{figure}[ht]
    \centering
    \subfloat[\label{fig7a}]{
        \includegraphics[height=3.8cm,width=4.3cm]{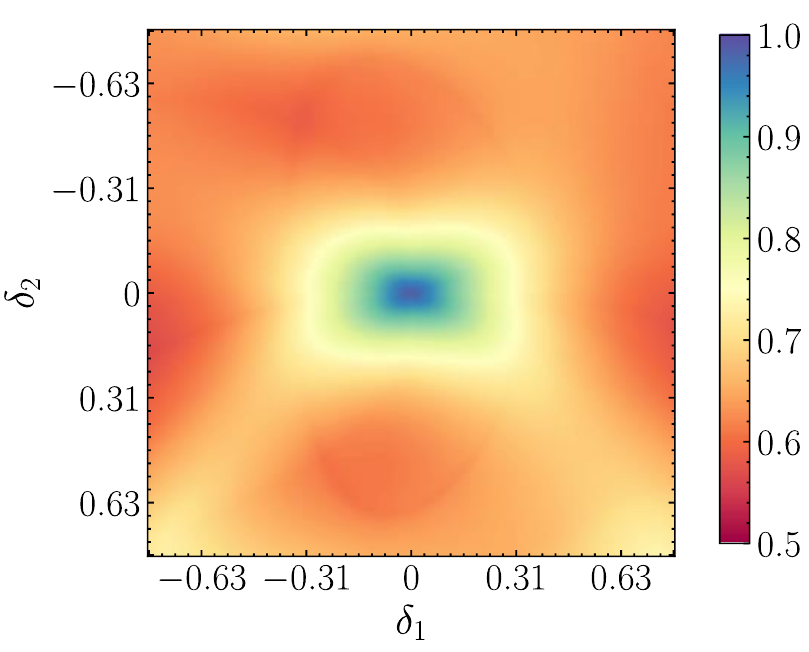}
    }
    \subfloat[\label{fig7b}]{
        \includegraphics[height=3.8cm,width=4cm]{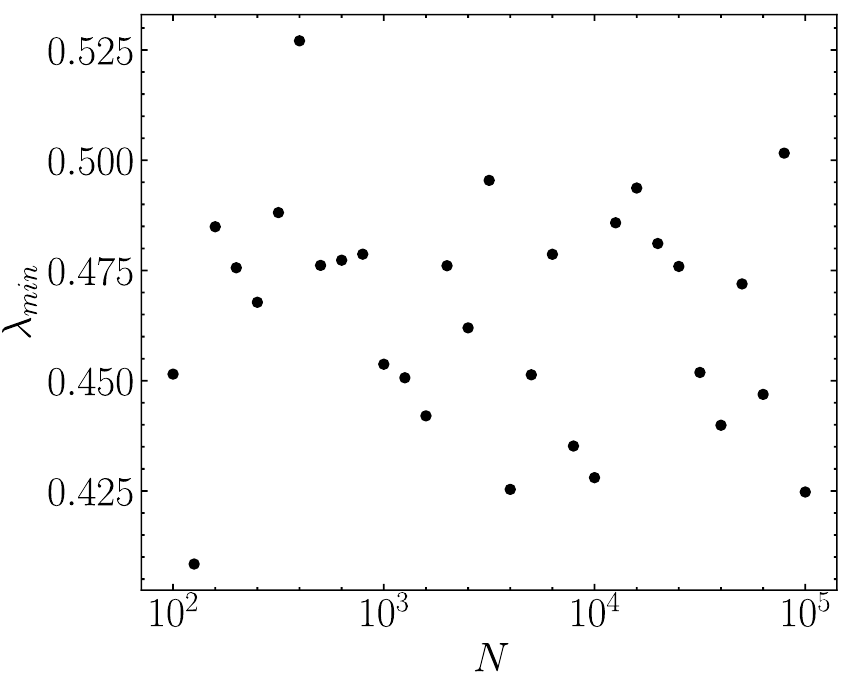}
    }
    \caption{\label{fig7} (a) The minimal eigenvalues of $R_{\kappa}$ for various mixtures, $\exp\{i(\delta_1P_1+\delta_2P_2)\}S$, where $S$ are given ICs. (b) For fixed sample size, each datapoint represents the lowest one among all the minimal eigenvalues of $R_{\kappa}(\bm{W}S)$ for randomly selected mixing coefficients $\bm{W}$.}
\end{figure}

\begin{figure}[ht]
    \centering
    \subfloat[\label{fig8a}]{
        \includegraphics[height=3.3cm,width=3.6cm]{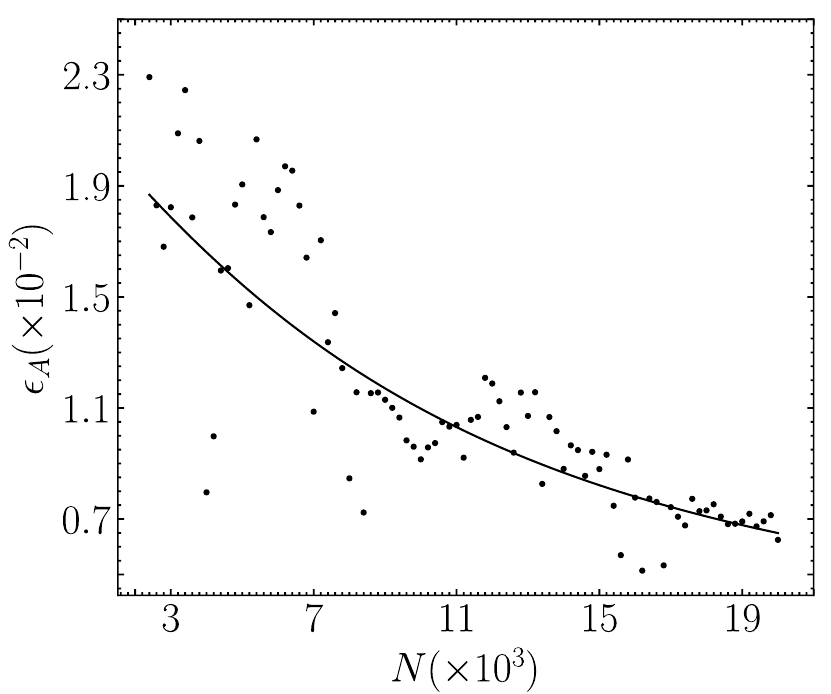}
    }
    \subfloat[\label{fig8b}]{
        \includegraphics[height=3.5cm,width=4cm]{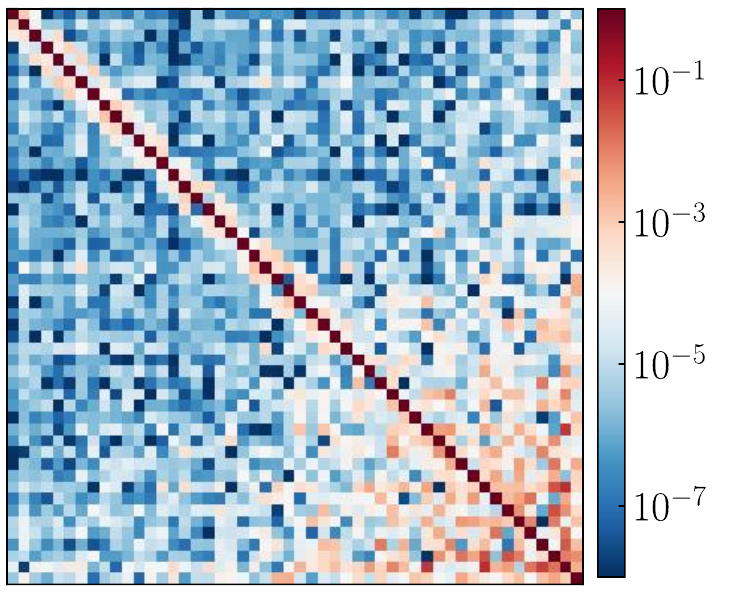}
    }
    \caption{\label{fig8} (a) Amari error $\epsilon_A$ between the optimized mixing coefficients and the pre-known one as $N$ increases. (b) Correlation matrix between ICs computed by classically simulating our algorithm and by the Icasso algorithm called from the stabilized-ica package. The color of the grid in row $i$, column $j$ corresponds to the correlation coefficient between the $i$-th and the $j$-th ICs separated with each method.}
\end{figure}

With the same setup where $\delta_1$ and $\delta_2$ are variable, the minimum eigenvalues $\xi$ of $R_{\kappa}$ are given in Fig.~\ref{fig7a}. We also repeat randomly choosing mixing coefficients for $\xi$ and select the minimal for each fixed sample size. Each data point in Fig.~\ref{fig7b} represents the lowest $\xi$ we have met during our test for corresponding $N$, where no apparent numerical relationship appears even in such large ranges. Our upper bound of the error of $J(X)$ seems not to fail in more general cases.

Finally, we classically simulate the optimizer together with our contrast function computation subprogram. With the ICs shown in Fig.~\ref{fig3}, we recorded the optimized unmixing coefficient matrices which are then assessed by a metric named "Amari error" \cite{Amari1995, KICA2002}. If the known mixing coefficients matrix is $\bm{A}$, while the optimized unmixing matrix reads $\bm{W}^{-1}$, the Amari error is defined by
\begin{equation}
    \epsilon_A(\bm{A}, \bm{W}) = \frac{1}{2m}\sum_{i=1}^m(\frac{\sum_{j=1}^m|a_{ij}|}{\max_j|a_{ij}|}-1)+\frac{1}{2m}\sum_{j=1}^m(\frac{\sum_{i=1}^m|a_{ij}|}{\max_i|a_{ij}|}-1),
\end{equation}
where $a_{ij}=(\bm{A}\bm{W}^{-1})_{ij}$. Fig.~\ref{fig8} shows that $\epsilon_A$ decreases with $N$, which is intuitive for a feasible ICA algorithm.

To illustrate an example of applications, we introduce a transcriptomic dataset \cite{BioData2017}. It contains the expression of genes for patients with lung cancer. We simulate our algorithm to separate the latent ICs from this dataset with a simplified optimization process. Compare our results of decomposition with the one obtained by stabilized-ica (sica) package based on Icasso algorithm \cite{Icas2003}, available from \href{https://github.com/ncaptier/stabilized-ica}{https://github.com/ncaptier/stabilized-ica}. The correlation matrix between ICs separated in different ways is shown in Fig.~\ref{fig8b}. A clear correlation exists between two sets of ICs.

\section{\label{sec5.0}Conclusion}
In this paper, we have presented a quantum ICA algorithm scaling as $\tilde{O}(1/\epsilon_1^2)$ to estimate a contrast function adapted from the one of classical KICA algorithm with the precision of $\epsilon_1$. Our method exponentially accelerates the computation of KICA. This speedup comes from the better performance of matrix operations offered by quantum arithmetics, QSVT, amplitude estimation, etc. We have provided an upper bound of the error of the estimated $R_{\kappa}$ which gives the scaling of the error of the contrast function. This verifies that the exponential speedup is generally achievable. Compared to the existing quantum algorithms for ICA based upon kurtosis or entropy, our method seems to be a compromise between robustness and efficiency.

In Appendix~\ref{app4}, we asymptotically analyze why the Gaussian kernel is an appropriate choice for the KICA algorithm by research of $R_{\kappa}(Z)$, whose off-diagonal elements cannot be all zero unless all the random variables $z_i$ are mutually independent if Gaussian kernel is chosen and $z_i$ are near-independent. It will occur that the contrast function decreases from the center to the periphery when $Z$ is taken among different perturbations of independent components regardless of the statistical error.

We implemented numerical experiments with raw data sampled from given distributions and a biomedical dataset. These samples were then sent for estimation of contrast function, success rate, error performance, the effectiveness of separation, etc. to ensure the reliability and further reveal the performance of our adapted algorithm. The error and the success rate were also checked to be consistent with our analysis.

However, we have not yet considered whether such distributions exist that $R_{\kappa}(Z)=I$ but $\bm{z}_i$ are not mutually independent at all. If they exist, the performance around this extremum is left for further consideration.

It is seen that we perform a principal components analysis of $K_i^TK_i$ which may be worse than directly analyzing $K_i$ which might be realized by the block-encoding technique introduced in \cite{Gramblock2022}. Moreover, perhaps the input model can be replaced by $O_{f_{x_i}}:|0\rangle\to\sum_x\sqrt{f_{x_i}(x)}|x\rangle$, for PDF of $x_i$, $f_{x_i}$. Such input model gives the possibility to estimate statistical properties with the precision of $\epsilon$ using $O(\epsilon^{-1})$ queries \cite{EntropyEst2021,Statpro2022}, compared to $O(\epsilon^{-2})$ sampling complexity in classical cases. It is also possible that the model similarly reduces the sampling complexity when computing the contrast function.

\acknowledgments
This work was supported by the National Natural Science Foundation of China (Grant No. 12034018) and Innovation Program for Quantum Science and Technology (Grant No. 2021ZD0302300).

The numerical calculations in this paper have been done on the supercomputing system in the Supercomputing Center of University of Science and Technology of China.
\appendix

\section{\label{app1}ICA based on statistics and quantum speedup}
Non-Gaussianity has been widely used as a measure of the independence in some ICA models. Such as kurtosis or mutual information, statistics can naturally represent the non-Gaussianity \cite{ICA2002}. As introduced later, existing quantum algorithms have shown advantages in estimating these statistics.


For simplicity, we assume components $\bm{y}_i$ of the random vector $\bm{y}$ have been whitened. And quantum oracles $O_{y_i}$ are given. 
We introduce the two measures of non-Gaussianity and their estimation on a quantum computer in this section. The corresponding optimization is similar to what we have introduced in our algorithm.

\subsection{\label{app1.1}ICA based upon kurtosis}
The kurtosis of $\bm{y}_i$ is defined by
\begin{equation}
    \mbox{kurt}(\bm{y}_i)=E\{\bm{y}_i^4\}-3(E\{\bm{y}_i^2\})^2.
\end{equation}
Suppose each IC $\bm{s}_i$ has kurtosis with the same sign. It is proved that the mixture, $\bm{a}^T\bm{y}$ where $\|\bm{a}\|_2=1$, has the maximal absolute kurtosis when $\bm{a}^T\bm{y}$ is properly one of the ICs\cite{ProfKurt1995}. Therefore, optimizing $|\mbox{kurt}(\bm{a}^T\bm{y})|$ gives one of the ICs. 

The quantum mean estimator introduced can achieve the estimation, just by estimating using the oracle $O_{y_i^k}:|j\rangle|0\rangle\to|j\rangle|y_{ij}^k\rangle$ constructed by quantum arithmetics. The query and gate complexity is $\tilde{O}(1/\epsilon)$, for $\epsilon$ being the estimation precision. It is worth noting that this methodology can be applied to other ICA algorithms using mean estimators such as the notable FastICA algorithm \cite{FastICA1999}.

\subsection{\label{app1.2}ICA based upon mutual information}
Mutual information is a strong candidate as a choice of contrast function, which can be defined by Shannon's entropy. 
The Shannon entropy of $\bm{y}$ is 
\begin{equation}
    H(\bm{y})=-\int f_{y}(\bm{\alpha})\log f_{y}(\bm{\alpha})d\bm{\alpha},
\end{equation}
where $f_{y}$ is the PDF of $\bm{y}$. It reads
\begin{equation}
    H(\bm{y})=-\sum_{\bm{\gamma}}p_{\bm{\gamma}}\log p_{\bm{\gamma}},
\end{equation}
for a discrete one, where $p_{\bm{\gamma}}$ is the probability that $\bm{y}_i=\gamma_i,i=1,2,\dots,m$. Then the mutual information $I$ between $\bm{y}_i$ is given by
\begin{equation}
    I(\bm{y})\equiv I(\bm{y}_1,\bm{y}_2,\dots,\bm{y}_m)=\sum_{i=1}^{m}H(\bm{y}_i)-H(\bm{y}).
\end{equation}
It is nonnegative, and zero if and only if $\bm{y}$ are mutually independent, and therefore a natural measure of the independence. 

A quantum Shannon's entropy estimator exists, as introduced in \cite{Entroest2018}, with both query and gate complexity of $\tilde{O}(\frac{\sqrt{N}}{\epsilon^2})$. 
For comparison, $\tilde{O}(1/\epsilon)$ queries suffices to estimate $E\{\bm{y}_i^p\}$ or $\mbox{kurt}(\bm{y}_i)$ and $\tilde{O}(1/\epsilon^2)$ queries are required for the adapted KICA contrast function. The kurtosis is known to be not robust so our adapted one may serve as a compromise between efficiency and robustness.



\section{\label{app2}Error analysis}
We refer to $\tilde{a}$ as an estimate of some variable $a$ in this section.
\subsection{\label{app2.1}The error of preprocessing}
We first analyze the error of the constructed $O_{y'_i}$ in Eq.~\ref{eq16real}. 
We denote the covariance matrix of $\bm{x}$ by $M$, and we have
\begin{equation}
    \begin{split}
        |\tilde{M}_{ij}-&M_{ij}|\leq|\tilde{E}\{\bm{x}_i\bm{x}_j\}-\tilde{E}\{\bm{x}_i\bm{x}_j\}|+\\
        &|\tilde{E}\{\bm{x}_i\}\tilde{E}\{\bm{x}_j\}-E\{\bm{x}_i\}E\{\bm{x}_j\}|<\frac{1}{2}\frac{\|M\|_2^{3/2}\epsilon}{\mu_M^2m}\\ 
    \end{split}
\end{equation}
Next, we estimate the error of the inverse of the covariance matrix $M^{-1}$, where we denote $D\equiv\tilde{M}-M$:
\begin{equation}
    \frac{\|M^{-1}-M^{-1}\|_2}{\|M^{-1}\|_2}\leq\frac{\|M^{-1}D\|_2}{1-\|M^{-1}D\|_2}.
    \label{eqap1}
\end{equation}
Eq.~\ref{eqap1} works only if $\|M^{-1}D\|_2\leq1$. Note that 
\begin{equation}
    \|M^{-1}D\|_2\leq\frac{\mu_M}{\|M\|_2}\times m\times\frac{1}{2}\epsilon\frac{\|M\|_2^{3/2}}{\mu_M^2m}=\frac{1}{2}\frac{\epsilon\sqrt{\|M\|_2}}{\mu_M}<1.
\end{equation}
The first inequality comes from that $\|D\|_2\leq\|D\|_F\leq m\times\max_{i,j}\{D_{ij}\}$. Now we may apply the inequality and have,   
\begin{equation}
    \|\tilde{M}^{-1}-M^{-1}\|_2\leq\frac{\frac{1}{2}\|M\|_2^{-1/2}\epsilon}{1-\frac{1}{2}\frac{\epsilon\sqrt{\|M\|_2}}{\mu_M}}.
\end{equation}
Introduce a theorem for further estimation.
\begin{mythe}\cite{matanaly2013}
    \label{the7}
    For two positive definite matrix $A,B$, if eigenvalues of $A,B$ are bounded below some by positive number $a$, thus we have the inequality
    \begin{equation}
        |||A^r-B^r|||\leq ra^{r-1}|||A-B|||,\quad0<r<1,
    \end{equation}
    where $|||\cdot|||$ refers to a unitarily invariant norm including 2-norm we used later.
\end{mythe}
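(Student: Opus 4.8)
The plan is to reduce this matrix inequality to its scalar prototype and then lift it through an integral representation of the power function. For scalars the statement is immediate: for $s,t\ge a$ the mean value theorem gives $|s^r-t^r|\le r\,a^{r-1}|s-t|$, because $(t^r)'=r t^{r-1}$ is decreasing on $[a,\infty)$ and therefore bounded by $r a^{r-1}$. The whole content of the theorem is that the \emph{same} constant survives the passage to an arbitrary unitarily invariant norm of the matrix difference, and the cleanest way to see this is to linearize $A\mapsto A^r$ using the L\"owner-type representation valid for $0<r<1$,
\[
 t^r=\frac{\sin(r\pi)}{\pi}\int_0^\infty \lambda^{r-1}\frac{t}{\lambda+t}\,d\lambda,
\]
which follows from the Beta-integral $\int_0^\infty u^{r-1}(1+u)^{-1}du=\pi/\sin(r\pi)$. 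Applying the functional calculus to the positive definite matrices $A$ and $B$ turns this into $A^r=\frac{\sin(r\pi)}{\pi}\int_0^\infty \lambda^{r-1}A(\lambda I+A)^{-1}\,d\lambda$, and likewise for $B$.

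The key algebraic step is to subtract the two representations and simplify the integrand. Writing $X(\lambda I+X)^{-1}=I-\lambda(\lambda I+X)^{-1}$ and then invoking the resolvent identity $(\lambda I+A)^{-1}-(\lambda I+B)^{-1}=(\lambda I+A)^{-1}(B-A)(\lambda I+B)^{-1}$ collapses everything to the sandwiched form
\[
 A^r-B^r=\frac{\sin(r\pi)}{\pi}\int_0^\infty \lambda^{r}\,(\lambda I+A)^{-1}(A-B)(\lambda I+B)^{-1}\,d\lambda,
\]
in which the dependence on $A-B$ is now linear. This is precisely what makes the unitarily invariant norm tractable: every such norm satisfies $|||XYZ|||\le \|X\|_2\,|||Y|||\,\|Z\|_2$, so pushing the norm inside the integral lets me factor out $|||A-B|||$ and reduces the problem to bounding the operator norms of the two resolvents.

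Here the lower spectral bound enters: since the spectra of $A$ and $B$ lie in $[a,\infty)$, the spectra of $\lambda I+A$ and $\lambda I+B$ lie in $[\lambda+a,\infty)$, so $\|(\lambda I+A)^{-1}\|_2\le(\lambda+a)^{-1}$ and similarly for $B$. The estimate therefore reduces to the scalar integral $\int_0^\infty \lambda^{r}(\lambda+a)^{-2}\,d\lambda$, which I would evaluate by the substitution $\lambda=a u$ and the Beta function, obtaining $a^{r-1}\,\Gamma(1+r)\Gamma(1-r)=a^{r-1}\,r\pi/\sin(r\pi)$ after using $\Gamma(1+r)=r\Gamma(r)$ and the reflection formula $\Gamma(r)\Gamma(1-r)=\pi/\sin(r\pi)$. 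Multiplying by the prefactor $\sin(r\pi)/\pi$ cancels the transcendental factors exactly and leaves $r\,a^{r-1}|||A-B|||$, as claimed.

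I expect the main obstacle to be analytic rather than algebraic: one must justify the operator-valued (Bochner) integral representation and the interchange of the norm with the integral. Convergence itself is not delicate, since near $\lambda\to0$ the integrand behaves like $\lambda^r$ and near $\lambda\to\infty$ like $\lambda^{r-2}$, both integrable precisely because $0<r<1$; the care is in arguing that the functional calculus commutes with the integral and that the resolvent bounds hold uniformly in $\lambda$. Once the representation is in place, the constant-tracking through the Gamma/Beta evaluation is entirely routine, so the crux of the argument is the linearization via the resolvent identity together with the submultiplicativity of the unitarily invariant norm.
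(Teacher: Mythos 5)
Your proof is correct and is essentially the canonical argument for this result: the paper itself gives no proof, importing the theorem directly from its cited matrix-analysis reference, where it is established exactly along your lines — the L\"owner integral representation of $t^r$ for $0<r<1$, the resolvent identity to linearize the difference in $A-B$, the bound $|||XYZ|||\leq\|X\|_2\,|||Y|||\,\|Z\|_2$ combined with $\|(\lambda I+A)^{-1}\|_2\leq(\lambda+a)^{-1}$, and the Beta-integral evaluation $\int_0^\infty\lambda^r(\lambda+a)^{-2}d\lambda=a^{r-1}r\pi/\sin(r\pi)$, which cancels the prefactor to yield exactly $ra^{r-1}$. The analytic caveats you flag (Bochner integrability and pushing the norm through the integral) are routine given the uniform resolvent bounds, so nothing further is needed.
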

The maximal eigenvalue of $\tilde{M}$ is bounded above by $\|M\|_2+\|D\|_2<\|M\|_2+\epsilon\frac{\|M\|_2^{3/2}}{2\mu_M^2}$. Replace $A,B,a,r$ in Theorem~\ref{the7} by $M^{-1}, \tilde{M}^{-1}, (\|M\|_2+\epsilon\frac{\|M\|_2^{3/2}}{2\mu_M^2})^{-1}, \frac{1}{2}$, we have:
\begin{equation}
    \begin{split}
        &\|\tilde{M}^{-\frac{1}{2}}-M^{-\frac{1}{2}}\|_2\\
        &<\frac{1}{2}(\|M\|_2+\epsilon\frac{\sqrt{\|M\|_2}}{2\mu_M^2})^{\frac{1}{2}}\frac{\frac{1}{2}\|M\|_2^{-1/2}\epsilon}{1-\frac{1}{2}\frac{\epsilon\sqrt{\|M\|_2}}{\mu_M}}\\
        &<\frac{1}{2}\sqrt{\frac{3}{2}}\sqrt{\|M\|_2}\times\|M\|_2^{-1/2}\epsilon<\epsilon.
    \end{split}
\end{equation}


\subsection{\label{app2.2}The error of entries of \texorpdfstring{$R_{\kappa}$}{}}
In this section, the error bound of $\frac{\lambda_{ik}}{\lambda_{ik}+N\kappa/2}\frac{\lambda_{jl}}{\lambda_{jl}+N\kappa/2}\langle\vec{u}_{ik},\vec{u}_{jl}\rangle$ in cases of Theorem~\ref{the1} and~\ref{the2} is proved.

Use notations that $\epsilon_{\mu}$ and $\epsilon_I$ represent respectively the error of $\mu_{ik}$  induced by phase estimation and of $\mu_{ik}\langle\vec{u}_{ik},\vec{u}_{jl}\rangle$ induced by amplitude estimation for simplicity. Only entries with either $\mu_{ik}$ or $\mu_{jl}<\epsilon_{\mu}$ may be discarded. The discarded entry where $\mu_{ik}<\epsilon_{\mu}$ reads
\begin{equation}
    \frac{\mu_{ik}}{\mu_{ik}+\kappa/2}\frac{\mu_{jl}}{\mu_{jl}+\kappa/2}\langle\vec{u}_{ik},\vec{u}_{jl}\rangle<\frac{\epsilon_{\mu}}{\epsilon_{\mu}+\kappa/2}\times1<\frac{\xi\epsilon_1}{2},
\end{equation}
for the first case, and
\begin{equation}
    \frac{\mu_{ik}}{\mu_{ik}+\kappa/2}\frac{\mu_{jl}}{\mu_{jl}+\kappa/2}\langle\vec{u}_{ik},\vec{u}_{jl}\rangle<\frac{2}{\kappa}\epsilon_{\mu}\times G\epsilon_2=\frac{\epsilon_1\epsilon_2}{2},
\end{equation}
for the second, while this analysis also works when $\mu_{jl}<\epsilon_{\mu}$. For entries kept, the following results can be then computed:
\begin{align}
    &|\frac{\mu_{ik}}{\mu_{ik}+\kappa/2}-\frac{\tilde{\mu}_{ik}}{\tilde{\mu}_{ik}+\kappa/2}|<\frac{\kappa/2}{\tilde{\mu}_{ik}+\kappa/2}\frac{\epsilon_{\mu}}{\mu_{ik}+\kappa/2},\\
    &|\frac{1}{\mu_{ik}+\kappa/2}-\frac{1}{\tilde{\mu}_{ik}+\kappa/2}|<\frac{1}{\tilde{\mu}_{ik}+\kappa/2}\frac{\epsilon_{\mu}}{\mu_{ik}+\kappa/2}.
\end{align}
Combining these two errors and $\epsilon_I$, we give the error bound of entries by:
\begin{equation}
    \begin{split}
        &\frac{\kappa/2}{\tilde{\mu}_{jl}+\kappa/2}\frac{\epsilon_{\mu}}{\mu_{jl}+\kappa/2}\frac{\mu_{ik}\langle\vec{u}_{ik},\vec{u}_{jl}\rangle}{\mu_{ik}+\kappa/2}+\\
        &\frac{\tilde{\mu}_{jl}\mu_{ik}\langle\vec{u}_{ik},\vec{u}_{jl}\rangle}{\tilde{\mu}_{jl}+\kappa/2}\frac{1}{\tilde{\mu}_{ik}+\kappa/2}\frac{\epsilon_{\mu}}{\mu_{ik}+\kappa/2}+\epsilon_I\frac{\tilde{\mu}_{jl}}{\tilde{\mu}_{jl}+\kappa/2}\frac{1}{\tilde{\mu}_{ik}+\kappa/2}\\
        &<\frac{\epsilon_{\mu}\mu_{ik}}{\tilde{\mu}_{jl}+\kappa/2}\frac{\langle\vec{u}_{ik},\vec{u}_{jl}\rangle}{\mu_{ik}+\kappa/2}(\frac{2}{\kappa}\tilde{\mu}_{jl}+1)+\epsilon_I\frac{1}{\tilde{\mu}_{ik}+\kappa/2}\frac{\tilde{\mu}_{jl}}{\tilde{\mu}_{jl}+\kappa/2}\\
        &<\frac{2}{\kappa}\epsilon_{\mu}\langle\vec{u}_{ik},\vec{u}_{jl}\rangle+\frac{2}{\kappa}\epsilon_I.
    \end{split}
\end{equation}
It is evident that for the two cases, we have, respectively, 
\begin{equation}
    \frac{2}{\kappa}\epsilon_{\mu}\langle\vec{u}_{ik},\vec{u}_{jl}\rangle+\frac{2}{\kappa}\epsilon_I\leq\xi\epsilon_1,
\end{equation}
and
\begin{equation}
    \frac{2}{\kappa}\epsilon_{\mu}\langle\vec{u}_{ik},\vec{u}_{jl}\rangle+\frac{2}{\kappa}\epsilon_I\leq\epsilon_1\epsilon_2.
\end{equation}
To summarize, the expected error bound of $R_{\kappa}$ entries has been proved.

\subsection{\label{app2.3}The error of \texorpdfstring{$J(Z)$}{}}
Now we estimate the error of $\det(R_{\kappa})$, when additive error of entries of $R_{\kappa}$ is bounded by $\epsilon$, we have $\|\tilde{R}_{\kappa}-R_{\kappa}\|_2\leq\epsilon d$ where $d$ is the dimension of $R_{\kappa}$. According to the inequality \cite{MatPerturb2013}: Let A and B be $d\times d$ matrices, thus it holds that
\begin{equation}
    |\frac{\det(A+B)-\det(A)}{\det(A)}|\leq \frac{d\mu_A\|B\|/\|A\|}{1-d\mu_A\|B\|/\|A\|},
    \label{eq30}
\end{equation}
when $d\mu_A\|B\|/\|A\|\leq1$, where $\mu_A$ is the condition number of $A$. To apply this inequality, it requires that $d^2\epsilon\mu_{R_{\kappa}}/\|R_{\kappa}\|_2\leq1$, where $d=\Theta(m\log N)$ and $\|R_{\kappa}\|_2/\mu_{R_{\kappa}}$ is exactly $\xi$. It suffices to choose $\epsilon<\frac{\xi}{2d^2}$. Substitute $A=R_{\kappa}$ and $B=R_{\kappa}-\tilde{R}_{\kappa}$ into Eq.~\ref{eq30}, we have
\begin{equation}
    |\frac{\det(\tilde{R}_{\kappa})-\det(R_{\kappa})}{\det(R_{\kappa})}|\leq\frac{d^2\epsilon\xi^{-1}}{1-d^2\epsilon\xi^{-1}}.
\end{equation}
Then relative error $\eta = |(\det(\tilde{R}_{\kappa})-\det(R_{\kappa}))/\det(R_{\kappa})|$ scales as $O(\xi^{-1}\epsilon m^2\log^2N)$. 

\section{\label{app3}Proofs of conclusions}
\subsection{\label{p4}Proof of Lemma~\ref{lem4}}

    We provide the effect of the circuit as proof. $O_{z_i}\otimes O_{z_i}$ implemented on the initial state $|i\rangle|0\rangle\otimes|j\rangle|0\rangle$ gives
    \begin{equation}
            (O_{z_i}\otimes O_{z_i})|j\rangle|0\rangle\otimes|k\rangle|0\rangle=|j\rangle|z_{ij}\rangle\otimes|k\rangle|z_{ik}\rangle.
    \end{equation}
    Next estimate kernel function in Eq.~\ref{eq5} by quantum arithmetics on a newly introduced quantum register initialized to $|z\rangle$ with $s=\lceil\log_2(1/\epsilon)\rceil$ qubits:
    \begin{equation}
        \begin{split}
            |j\rangle|z_{ij}\rangle&|k\rangle|z_{ik}\rangle|z\rangle\to|j\rangle|z_{ij}\rangle|k\rangle|z_{ik}\rangle|z\oplus K(z_{ij},z_{ik})\rangle.
        \end{split}
    \end{equation}
    Uncompute $z_{ij}$ and $z_{ik}$ by $O_{z_i}^{\dagger}$:
    \begin{equation}
        |j\rangle|0\rangle|k\rangle|0\rangle|z\oplus K(z_{ij},z_{ik})\rangle,
        \label{eq21}
    \end{equation}
    which achieves the goal of the lemma.
    Overall quantum arithmetics cost no more than $O(\mbox{poly}\log(N/\epsilon))$ quantum gates, and 2 queries to $O_{z_i}$ and $O^{\dagger}_{z_i}$, with precision of $K(z_{ij},z_{ik})$ up to $\epsilon$.

\subsection{\label{p4.5}Proof of Lemma~\ref{lem4.5}}

    First, we have
    \begin{equation}
        \begin{split}
        &(I_{2n}\otimes\textbf{CR})(\tilde{O}_{K_i}\otimes I_{1})(I_n\otimes H^{\otimes n}\otimes I_{s+1})|j\rangle|0^{\otimes s+1}\rangle\\
        &=\frac{1}{\sqrt{N}}\sum_{r=1}^N|j\rangle|r\rangle\bigg(|(\tilde{K}_{i})_{jr}\rangle((\tilde{K}_{i})_{jr}|0\rangle+\sqrt{1-(\tilde{K}_{i})_{jr}^2}|1\rangle\bigg),
        \end{split}
    \end{equation}
    and 
    \begin{equation}
        \begin{split}
            &(\textbf{SWAP}_n\otimes I_{s+1})(\tilde{O}_{K_i}\otimes I_{1})(I_n\otimes H^{\otimes n}\otimes I_{s+1})|k\rangle|0^{\otimes s+1}\rangle\\
            &=\frac{1}{\sqrt{N}}\sum_{t=1}^N|t\rangle|k\rangle|0^{\otimes s+1}\rangle.
        \end{split}
    \end{equation}
    Combining the two gives that
    \begin{equation}
        \begin{split}
            \langle k|\langle 0^{\otimes n+s+1}|\tilde{U}_{K_i}|j\rangle|0^{\otimes n+s+1}\rangle=\frac{(\tilde{K}_i)_{jk}}{N}=\frac{(\tilde{K}_i)_{kj}}{N}.
        \end{split}
    \end{equation}
    The controlled rotation can be implemented with the complexity of $O(\mbox{poly}\log(N/\epsilon))$ for the precision of rotation up to $\epsilon$. Therefore, an $(N,n+s+1,\epsilon)$-block-encoding of $\tilde{K}_i$ is given, which uses $O(\mbox{poly}\log(N/\epsilon))$ elementary gates and one query to $\tilde{O}_{K_i}$ and its inverse, as discussed in \cite{QSVT2019}. For our purpose, introduce another ancilla and then $\mbox{C}_{\Pi}\mbox{NOT}\cdot \tilde{U}_{K_i}\cdot \mbox{C}_{\Pi}\mbox{NOT}$ is an $(N,n+s+2,\epsilon)$-block-encoding of $K_i/N$, since with the projection,
    \begin{equation}
        \Pi=(I_n-|\vec{1}\rangle\langle\vec{1}|)\otimes|0^{\otimes n+s+1}\rangle\langle0^{\otimes n+s+1}|.
    \end{equation}
    we have
    \begin{equation}
        \Pi \tilde{U}_{K_i}\Pi=\frac{K_i}{N}\otimes|0^{\otimes n+s+1}\rangle\langle0^{\otimes n+s+1}|,
    \end{equation}
    where we use the relation that $K_i=(I_n-|\vec{1}\rangle\langle\vec{1}|)\tilde{K}_i(I_n-|\vec{1}\rangle\langle\vec{1}|)$.

\subsection{\label{p4.75}Proof of Lemma~\ref{lem4.75}}

Ignore the error induced by block-encoding process because of the $\mbox{poly}\log(N/\epsilon)$ dependence, the block-encoding of the $e^{iK_it}$ also has the eigenvalue $e^{i\lambda_{ik}t}$
for eigenvector $|u_{ik}\rangle$, in the subspace where the last $n+s+2$ qubits are set $|0^{\otimes n+s+1},1\rangle$. Therefore, the effect of the phase estimation of $e^{iK_it}$ is as claimed for $U_i$. The complexity comes mainly from the phase estimation, scaling as $\tilde{O}(1/\epsilon)$ for both query and gate one.

\subsection{\label{pl5}Proof of Lemma~\ref{lem5}}
The quantum circuit illustrated in Fig.~\ref{fig2} accomplishes the state preparation. We have, for the first part,
\begin{equation}
    \begin{split}
        (\bm{CR}\otimes I_{2n})&(I\otimes\tilde{O}_{K_i})|0^{\otimes 1+s}, 0^{\otimes s+2n}\rangle=\\
        &\sum_{j,k=1}^{N}(\tilde{K_i})_{jk}|0^{\otimes 1+s},j,k\rangle+|\perp_1\rangle.
    \end{split}
\end{equation}
The second part is a projection so that 
\begin{equation}
    \sum_{j,k=1}^{N}(\tilde{K_i})_{jk}|0^{\otimes 2}, j,k\rangle\to\sum_{j,k=1}^{N}(K_i)_{jk}|1^{\otimes 2}, j,k\rangle+|\perp_2\rangle.
\end{equation}
Represent $|K_i\rangle$ in the basis of eigenvectors of $K_i$ by 
\begin{equation}
    |K_i\rangle=\frac{1}{N}\sum_{k=1}^{N}\lambda_{ik}|u_{ik}\rangle|u_{ik}\rangle.
\end{equation}
Use the superscript to label the register, Eq.~\ref{eq28.9} comes up by
\begin{equation}
    \begin{split}
        |\phi_{\epsilon}\rangle=\Pi_{\epsilon/2}^3U_{i}^{13}&|K_i\rangle_{12}|0\rangle_3|0\rangle_4=\\
        &\sum_{k=1}^{N}\frac{\lambda_{ik}}{N}|u_{ik}\rangle_1|u_{ik}\rangle_2|\frac{\tilde{\lambda}_{ik}}{N}\rangle_3|0\rangle_4.
    \end{split}
\end{equation}
then the output of the quantum circuit in Fig.~\ref{fig2} is written by 
\begin{equation}
    \begin{split}
        |\psi_{\epsilon}\rangle\equiv &\Pi_{\epsilon/2}^4\Pi_{\epsilon/2}^3U_j^{24}U_{i}^{13}|K_i\rangle_{12}|0\rangle_3|0\rangle_4=\\
        &\sum_{k,l=1}^{N}\frac{\lambda_{ik}}{N}\langle u_{jl}|u_{ik}\rangle|u_{ik}\rangle_1|u_{jl}\rangle_2|\frac{\tilde{\lambda}_{ik}}{N}\rangle_3|\frac{\tilde{\lambda}_{jl}}{N}\rangle_4.
    \end{split}
    \label{eq28}
\end{equation}
Let the precision of phase estimation be $\epsilon/2$, and we have the claimed precision since eigenvalues larger than $\epsilon$ would not be discarded.
The complexity of the preparation comes mainly from $U_i,U_j$, scaling as $\tilde{O}(\frac{1}{\epsilon})$, which completes the proof.

\subsection{Proof of Theorem~\ref{the1}}
Note that $K_i$ is the same for $\tilde{y}_{ij}$ respectively in Eqs.~\ref{eq17} and~\ref{eq18}, so the distinction does not matter. Substitute $\bm{z}=\tilde{\bm{y}}'$ for Lemma~\ref{lem5}. Let $\epsilon=\frac{\xi\kappa\epsilon_1}{4}$ for $O_{\epsilon}$. Implement the amplitude estimation of $O_{\epsilon}$ with precision of also $\frac{\xi\kappa\epsilon_1}{4}$.
Directly compute $R_{\kappa}(Z)$ and $J(Z)$ with estimated $\lambda_{ik}$ and $\lambda_{ik}|\langle u_{jl}|u_{ik}\rangle|$. As shown in Appendix~\ref{app2.2} and~\ref{app2.3}, such estimation has the multiplicative error bounded above by $2d^2\epsilon_1$. Combined with the error induced by preprocessing in Lemma~\ref{lem11} with the precision of $\epsilon_2$, where it suffices to estimate the covariance matrix just once, here we accomplish the proof.

\subsection{Proof of Theorem~\ref{the2}}
Also do the substitution $\bm{z}=\tilde{\bm{y}}'$. Given an upper bound of $|\langle \vec{u}_{jl},\vec{u}_{ik}\rangle|$ for arbitrary $i,k,j,l$ by $\epsilon_2G$. $G$ is asymptotically a constant irrelevant to $N,\epsilon_1,\epsilon_2$ as in Appendix~\ref{app4}.
Let $\epsilon=\frac{\kappa\epsilon_1}{4G}$. Implement the amplitude estimation of $O_{\epsilon}$ with precision of $\frac{\epsilon_1\epsilon_2\kappa}{4}$. Note that the minimal eigenvalue $\xi$ of $R_{\kappa}(Z)$ is bounded below by $1-d\epsilon_2G$. Using results in Appendix~\ref{app2.2} and~\ref{app2.3} proves that the directly computed $J(Z)$ has multiplicative error bounded above by $O(d^2\epsilon_1\epsilon_2)$. Further consideration of preprocessing with the precision of $\epsilon_2$ completes the proof.

\subsection{\label{pl14}Proof of Lemma~\ref{lem14}}
Due to the factor $(\hat{I}-\hat{1}_i)$ of $\hat{K}$, we have
\begin{equation}
    \hat{1}_i\phi_{ik}=\int_{\mathbb{R}}\phi_{ik}(x)f_{z_i}(x)dx=0.
    \label{eq42}
\end{equation}
Up to the first order terms of $\epsilon$, we have first $F_{ij}=F_{ji}+O(\epsilon^2)$. The joint distribution of $z_i,z_j$ can be written as
\begin{equation}
    \begin{split}
        f_{z_i,z_j}&(x,y)=f_{z_i}(x)f_{z_j}(y)-\\
        &\epsilon\alpha_{ij}(y\frac{\partial}{\partial x}-x\frac{\partial}{\partial y})f_{s_i}(x)f_{s_j}(y)+O(\epsilon^2).
    \end{split}
\end{equation}
The distribution of $\bm{z}_i$ is almost invariant, namely $f_{x_i}(x)=f_{s_i}(x)+O(\epsilon^2)$. Considering Eq.\ref{eq42}, we have
\begin{equation}
    \begin{split}
        &M_{z_i,z_j}(\phi_{ik},\phi_{jl})=\int_{\mathbb{R}^2}\phi_{ik}(x)\phi_{jl}(y)[f_{z_i}(x)f_{z_j}(y)-\\
        &\epsilon\alpha_{ij}(y\frac{\partial}{\partial x}-x\frac{\partial}{\partial y})f_{s_i}(x)f_{s_j}(y)+O(\epsilon^2)]dxdy\\
        =\epsilon&\alpha_{ij}\int_{\mathbb{R}^2}\phi_{ik}(x)\phi_{jl}(y)(x\frac{\partial}{\partial y}-y\frac{\partial}{\partial x})f_{s_i}(x)f_{s_j}(y)dxdy+O(\epsilon^2).
    \end{split}
    \label{eq43}
\end{equation}

\subsection{\label{pl15}Proof of Lemma~\ref{lem15}}
That 
\begin{equation}
    \int_{\mathbb{R}^2}d\rho_{s_i}(x)d\rho_{s_j}(y)(x\frac{\partial}{\partial y}-y\frac{\partial}{\partial x})\phi_{ik}(x)\phi_{jl}(y)
\end{equation}
is always 0 for all $k,l$ such that $\mu_{ik},\mu_{jl}\neq0$ is equivalent to
\begin{equation}
    \int_{\mathbb{R}^2}d\rho_{s_i}(x)d\rho_{s_j}(y)(x\frac{\partial}{\partial y}-y\frac{\partial}{\partial x})K'_i(x,\alpha)K'_j(y,\beta)=0. 
    \label{eq45}
\end{equation}
Use the following notation:
\begin{align}
    &g_i(\alpha)=\int_{\mathbb{R}}d\rho_{s_i}(x)K(x,\alpha),\\
    &g'_i(\alpha)=\int_{\mathbb{R}}d\rho_{s_i}(x)\frac{\partial}{\partial x}K(x,\alpha),\\
    &C_i=\int_{\mathbb{R}}d\rho_{s_i}(\alpha)\alpha g_i(\alpha),
\end{align}
similarly for the definition of $g_j,C_j$. Omitting the terms that are 0 after integration or differential and substituting expression of Gaussian kernel (where we take $\sigma=1/\sqrt{2}$ for simplicity, little difference for other values of $\sigma$), Eq.\ref{eq45} is then
\begin{equation}
    \begin{split}
        2\beta g_j&(\beta)[\alpha g_i(\alpha)-\frac{1}{2}g'_i(\alpha)]-C_i g'_j(\beta)-\\
        &2\alpha g_i(\alpha)[\beta g_j(\beta)-\frac{1}{2}g'_j(\beta)]+C_j g'_i(\alpha)=0.
    \end{split}
\end{equation}
The above equation is simplified to
\begin{equation}
    [C_i-\beta g_j(\beta)]g'_i(\alpha)=[C_j-\alpha g_i(\alpha)]g'_j(\beta),
\end{equation} 
which leads to three solutions:
\begin{align}
    (1), g_i(\alpha)&=C_j/\alpha\\
    (2), g'_i(\alpha)&=g'_j(\beta)=0,\\
    \begin{split}
    (3), g'_i(\alpha)&+k[C_j-\alpha g_i(\alpha)]=0\to\\
    &g_i(\alpha)=e^{-a_1\alpha^2}(a_2+a_3\int_0^\alpha e^{a_1\alpha'^2}d\alpha').
    \end{split}
\end{align}
Conditions that $\int_{\mathbb{R}}f_{s_i}(x)dx=1$, $f_{s_i}(x)>0$ and $K(x,\alpha)=\exp\{-(x-\alpha)^2\}$ requires $\lim_{\alpha\to\infty}g_i(\alpha)=0$ so that the first two solutions should be discarded and $a_3=0$ for the last solution. $g_i(\alpha)=a_2e^{-a_1\alpha^2}$ leads to that $f_{s_i}$ must have the form of $\frac{\exp\{-x^2/\sigma_i^2\}}{\sqrt{2\pi}\sigma_i}$, i.e. $\bm{s}_i$ is Gaussian. The discussion above also works for $\bm{s}_j$ which completes the proof.

\section{\label{app4}Asymptotic properties}

In this section, we analyze the asymptotic properties of $K_i(Z)$ when $\bm{z}$ can be written as
\begin{equation}
    \bm{z}=(1+\epsilon_2F)\bm{s}+O(\epsilon^2).
    \label{eq29.5}
\end{equation}
In this setting, we show that the contrast function can only reach its extremum when the variables $z_i$ are independent if choosing the Gaussian kernel. This also helps prove our Theorem~\ref{the2}. We first introduce integral operators corresponding to Gram matrices used as a tool for analysis in Appendix~{app4.1}. In Appendix~\ref{app4.2}, an integral operator estimating $\langle\vec{u}_{jl},\vec{u}_{ik}\rangle$ is proposed and analyzed in the near-independent setting which gives the value of the inner product.

\subsection{\label{app4.1}Integral operators}
Let $K\in L^2(\mathbb{R}\times\mathbb{R})$ be a symmetric kernel and $f_{z_i}(x)\in L^2(\mathbb{R})$ be the PDF of a random variable $z_i$. For function $\phi(y)\in L^2(\mathbb{R})$, define several operators mapping from $L^2(\mathbb{R})$ to $L^2(\mathbb{R})$ as follows:
\begin{align}
    &\hat{T}_i\phi(y)=\int_{\mathbb{R}}K(x,y)\phi(x)d\rho_{z_i}(x),\\
    &\hat{I}\phi(y)=\phi(y),\\
    &\hat{1}_i\phi(y)=\int_{\mathbb{R}}\phi(x)d\rho_{z_i}(x),\\
    \label{eq33}&\hat{K}_i=(\hat{I}-\hat{1}_i)\hat{T}_i(\hat{I}-\hat{1}_i),
\end{align}
where $d\rho_{z_i}(x)$ refers to $f_{z_i}(x)dx$. In particular, use $K'(x,y)$ to write $\hat{K}_i$ concretely by
\begin{equation}
    \label{eq34}
    \hat{K}_i\phi(y)=\int_{\mathbb{R}}K'(x,y)\phi(x)d\rho_{z_i}(x).
\end{equation}
$\hat{K}_i$ is a Hilbert-Schmidt integral operator and then compact. Then we can talk about the $k$-th eigenvalues $\mu_{ik}$ and eigenfunctions $\phi_{ik}$ of $\hat{K}_i$, which is defined by
\begin{equation}
    \hat{K}_i\phi_{ik}(x)=\mu_{ik}\phi_{ik}(x). 
\end{equation}
Nyström method offers an estimate for this equation \cite{Numana2005}, by replacing the integration on $\mathbb{R}$ by summation on points sampled from the given distribution. For an arbitrary function $\phi$, we have
\begin{equation}
    \int_{\mathbb{R}}\phi(x)d\rho_{z_i}(x)\approx\frac{1}{N}\sum_{n=1}^N\phi(z_{in}),
\end{equation}
where $z_{in}$ are the $n$-th of all $N$ samples of $\bm{z}_i$. Existing analysis on convergency shows that with high probability, the integration can be approximated with precision up to $O(\frac{1}{\sqrt{N}})$ \cite{LearnKer2010}. Use the assumption given in Eq.~\ref{eq35new} for simplicity, then we have
\begin{equation}
    \mu_{ik}\phi_{ik}(z_{im})=(\hat{K}_i\phi_{ik})(z_{im})=\frac{1}{N}\sum_{n=1}^N(K_i(Z))_{mn}\phi_{ik}(z_{in}).
    \label{eqA11}
\end{equation}
Let $\Phi_{ik}=(\phi_{ik}(z_{i1}),\dots,\phi_{ik}(z_{iN}))^T$, then Eq.~\ref{eqA11} can be rewritten as
\begin{equation}
    \frac{K_i(Z)}{N}\Phi_{ik}=\mu_{ik}\Phi_{ik}.
    \label{eq39}
\end{equation}
This is properly the eigenequation of $K_i(Z)/N$, with eigenvectors $\Phi_{ik}$ which can be identified with $\vec{u}_{ik}$. Nyström method has established a connection between Gram matrices and integral operators from which properties of $K_i$ can be speculated. Conclusions come up in this way that eigenvalues of $K_i/N$ are about $\mu_{ik}$ irrelevant to $N$ and decay rapidly which allows a low-rank approximation \cite{KICA2002}.

\subsection{\label{app4.2}An operator for the inner product}
Define an operator $M_{z_i,z_j}$:
\begin{equation}
    M_{z_i,z_j}(\phi,\psi)=\int_{\mathbb{R}^2}\phi(x)\psi(y)f_{z_i,z_j}(x,y)dxdy,
    \label{eq40}
\end{equation}
where $f_{z_i,z_j}$ is the joint PDF of random variables $z_i,z_j$. It's direct to check that using Nyström method, $M_{z_i,z_j}(\phi_{ik},\phi_{jl})$ can be approximated by
\begin{equation}
    M_{z_i,z_j}(\phi_{ik},\phi_{jl})\approx\sum_{m=1}^N\phi_{ik}(z_{im})\phi_{jl}(z_{jm})=\langle \vec{u}_{ik},\vec{u}_{jl}\rangle.
\end{equation}
We give the expression of $M_{z_i,z_j}(\phi_{ik},\phi_{jl})$ when $\bm{z}$ satisfy Eq.~\ref{eq29.5} in Lemma~\ref{lem14}.
\begin{mylem}
    \label{lem14}
    Given $\epsilon_2^2\ll\epsilon_2$, suppose $\bm{z}_i$ satisfy Eq.\ref{eq29.5}. For eigenfunction $\phi_{ik}(x)$ defined in Eq.~\ref{eq34}, $M_{z_i,z_j}(\phi_{ik},\phi_{jl})$ equals
    \begin{equation}
    \epsilon_2F_{ij}\int_{\mathbb{R}^2}\phi_{ik}(x)\phi_{jl}(y)(x\frac{\partial}{\partial y}-y\frac{\partial}{\partial x})f_{z_i}(x)f_{z_j}(y)dxdy+O(\epsilon^2).
    \label{eqA12}
    \end{equation}
\end{mylem}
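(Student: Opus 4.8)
The plan is to leverage the centering built into $\hat{K}_i$ to annihilate the leading contribution to $M_{z_i,z_j}$, so that only the first-order correlation survives. The single fact doing the work is that the outer factor $(\hat{I}-\hat{1}_i)$ in Eq.~\ref{eq33} forces every eigenfunction $\phi_{ik}$ of Eq.~\ref{eq34} with $\mu_{ik}\neq0$ to be $f_{z_i}$-centered, i.e. $\int_{\mathbb{R}}\phi_{ik}(x)f_{z_i}(x)\,dx=0$, and likewise for $\phi_{jl}$; I would record this identity first, since it is exactly what kills the zeroth-order piece.

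Next I would expand the pairwise joint density $f_{z_i,z_j}$ to first order in $\epsilon_2$ from the near-identity model in Eq.~\ref{eq29.5}. After whitening $\bm{W}$ acts orthogonally, so the relevant generator $F$ is antisymmetric (consistent with $1+\epsilon_2F$ being orthogonal); in particular $F_{ii}=F_{jj}=0$, which means each individual marginal is unperturbed at first order, $f_{z_i}=f_{s_i}+O(\epsilon_2^2)$. The cross-mixings $\epsilon_2 F_{ik}s_k$ with $k\neq i,j$ are zero-mean and independent of $(s_i,s_j)$, so they perturb the pairwise marginal only through an $O(\epsilon_2^2)$-variance convolution and may be dropped. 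This reduces the problem to the $2\times2$ block acting on $(s_i,s_j)$, where a Jacobian / change-of-variables expansion of $f_{s_i}(x)f_{s_j}(y)$ about the unperturbed argument yields $f_{z_i,z_j}(x,y)=f_{z_i}(x)f_{z_j}(y)-\epsilon_2 F_{ij}\,(y\partial_x-x\partial_y)f_{s_i}(x)f_{s_j}(y)+O(\epsilon_2^2)$, with the correlation correction appearing in exactly the rotation-generator form.

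I would then substitute this expansion into the definition of $M_{z_i,z_j}$ in Eq.~\ref{eq40}. The product piece factorizes as $\big(\int\phi_{ik}f_{z_i}\big)\big(\int\phi_{jl}f_{z_j}\big)$ and vanishes identically by the centering identity, so only the first-order term remains; flipping the sign of the differential operator to match $(x\partial_y-y\partial_x)$ and replacing $f_s$ by $f_z$ inside the already $O(\epsilon_2)$ integrand (which absorbs a further $O(\epsilon_2^2)$) produces precisely Eq.~\ref{eqA12}. As a consistency check I would verify $M_{z_i,z_j}(\phi_{ik},\phi_{jl})=M_{z_j,z_i}(\phi_{jl},\phi_{ik})$, which simultaneously pins down the antisymmetry $F_{ij}=-F_{ji}$ and the sign of the operator.

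The step I expect to be the main obstacle is the first-order density expansion, and specifically the two reductions it relies on: that the diagonal of $F$ contributes only to the single-variable marginals and is therefore fully absorbed into the leading product $f_{z_i}f_{z_j}$, and that the mixings into the remaining $m-2$ independent components enter the pairwise marginal only at $O(\epsilon_2^2)$. Both hinge on $\bm{s}$ being zero-mean and mutually independent, and the care lies in tracking exactly which terms are genuinely second order so that the $O(\epsilon_2^2)$ remainder is honest. Once the centering identity and this clean expansion are in hand, the remaining algebra is forced.
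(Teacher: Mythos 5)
Your proposal follows essentially the same route as the paper's own proof: record the centering identity $\hat{1}_i\phi_{ik}=\int_{\mathbb{R}}\phi_{ik}(x)f_{z_i}(x)dx=0$ forced by the outer factor $(\hat{I}-\hat{1}_i)$ (the paper's Eq.~\ref{eq42}, valid whenever $\mu_{ik}\neq0$), expand $f_{z_i,z_j}$ to first order in rotation-generator form, note $f_{z_i}=f_{s_i}+O(\epsilon^2)$, and substitute into Eq.~\ref{eq40} so that the zeroth-order product term dies and only the first-order integral survives. Where you go beyond the paper is in justifying the density expansion, which the paper simply writes down: your reductions (diagonal of $F$ absorbed into the marginals, mixing into the remaining $m-2$ components entering the pairwise marginal only at $O(\epsilon_2^2)$ because the $\bm{s}_k$ are zero-mean and independent, Jacobian/change-of-variables expansion on the residual $2\times2$ block) are exactly the missing details, and they are sound. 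The one divergence worth flagging: you take $F$ antisymmetric, while the paper's proof asserts $F_{ij}=F_{ji}+O(\epsilon^2)$. Here your version is the one the computation actually requires --- for general $F$ the first-order cross term is $-\epsilon_2\big(F_{ij}\,y\partial_x+F_{ji}\,x\partial_y\big)f_{s_i}(x)f_{s_j}(y)$, which collapses to the claimed generator $(x\partial_y-y\partial_x)$ with coefficient $F_{ij}$ precisely when $F_{ji}=-F_{ij}$; a symmetric $F$ would instead produce $(y\partial_x+x\partial_y)$ and Eq.~\ref{eqA12} would be wrong. Be aware, though, that exact antisymmetry follows from exact whitening (covariance of $\bm{z}$ equal to $I$ forces $F+F^T$ to vanish at leading order), whereas in the paper's setting part of $F$ originates from the symmetric estimation error in $\tilde{M}^{-1/2}$ (Section~\ref{sec3.4}), so orthogonality of $1+\epsilon_2F$ is not automatic; it is cleaner to state leading-order antisymmetry of $F$ as an explicit hypothesis (the lemma fails without it) rather than as a free consequence. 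With that caveat your argument is complete, and on this point it is tighter than the paper's own.
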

Denote by $C_{ik,jl}$ the integration in Eq.~\ref{eqA12} and observe the relation
\begin{equation}
    \begin{split}
        C_{ik,jl}\equiv&\int_{\mathbb{R}^2}\phi_{ik}(x)\phi_{jl}(y)(x\frac{\partial}{\partial y}-y\frac{\partial}{\partial x})f_{x_i}(x)f_{x_j}(y)dxdy\\
        =&-\int_{\mathbb{R}^2}d\rho_{s_i}(x)d\rho_{s_j}(y)(x\frac{\partial}{\partial y}-y\frac{\partial}{\partial x})\phi_{ik}(x)\phi_{jl}(y).
    \end{split}
    \label{eq44}
\end{equation}
This relation helps prove that if and only if $\bm{s}_i,\bm{s}_j$ are both Gaussian, $C_{ik,jl}=0$ holds for $k,l$ that satisfy $\mu_{ik},\mu_{jl}\neq0$, as stated formally in Lemma~\ref{lem15}.
\begin{mylem}
    \label{lem15}
    With the same setting of Lemma~\ref{lem14} and given PDFs $f_{s_i}$ of $\bm{s}_i$, if Gaussian kernel is selected, $K(x,y)=\exp\{(x-y)^2/\sigma^2\}$, $C_{ik,jl}=0$ holds for fixed $i,j$ and all $k,l$ such that $\mu_{ik},\mu_{jl}\neq0$ if and only if $\bm{s}_i,\bm{s}_j$ are both Gaussian variables.
\end{mylem}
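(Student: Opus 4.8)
The plan is to prove both implications by first converting the statement ``$C_{ik,jl}=0$ for all admissible $k,l$'' into a single functional equation on the source densities, and then classifying the solutions of that equation. First I would pass from the eigenfunction condition to a kernel condition. Since $\hat{K}_i$ in Eq.~\ref{eq34} is Hilbert--Schmidt, the eigenfunctions $\{\phi_{ik}\}_{\mu_{ik}\neq 0}$ form an orthonormal basis of the range of $\hat{K}_i$, and by the Mercer expansion $K'(x,\alpha)=\sum_k \mu_{ik}\phi_{ik}(x)\phi_{ik}(\alpha)$ in the weighted $L^2$ sense. Hence demanding the rotation-generator integral in Eq.~\ref{eq44} to vanish for every $k,l$ with $\mu_{ik},\mu_{jl}\neq 0$ is equivalent to demanding that the same integral vanish after replacing the pair $\phi_{ik}(x)\phi_{jl}(y)$ by $K'_i(x,\alpha)K'_j(y,\beta)$, for all $\alpha,\beta$.

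Next I would substitute the Gaussian kernel into this kernel condition, apply the generator $x\partial_y-y\partial_x$, and integrate by parts against $d\rho_{s_i}\,d\rho_{s_j}$. Using the Gaussian identity $\partial_x K(x,\alpha)=-\partial_\alpha K(x,\alpha)$ and discarding terms that vanish upon integration or differentiation, I expect every surviving term to collapse into the three objects $g_i(\alpha)=\int K(x,\alpha)\,d\rho_{s_i}(x)$, its derivative $g'_i(\alpha)$, and the scalar $C_i=\int \alpha\, g_i(\alpha)\,d\rho_{s_i}(\alpha)$ (with the analogous $j$-objects). The payoff is that the kernel condition reduces to the separable equation
\[
  [C_i-\beta g_j(\beta)]\,g'_i(\alpha)=[C_j-\alpha g_i(\alpha)]\,g'_j(\beta).
\]
This is the bulk of the calculation and I expect the integration-by-parts bookkeeping here — tracking which centering and boundary terms cancel — to be the most laborious part of the argument.

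I would then run a separation-of-variables argument: because the equation factorizes into an $\alpha$-part times a $\beta$-part on each side, holding it for all $\alpha,\beta$ forces $g_i$ to satisfy a first-order linear ODE $g'_i(\alpha)+k[C_j-\alpha g_i(\alpha)]=0$ with a separation constant $k$, whose general solution is $g_i(\alpha)=e^{-a_1\alpha^2}(a_2+a_3\int_0^\alpha e^{a_1\alpha'^2}d\alpha')$, alongside the degenerate branches $g_i(\alpha)=C_j/\alpha$ and $g'_i\equiv g'_j\equiv 0$. The decisive step is to impose that $f_{s_i}$ is a genuine probability density: normalization and positivity force $g_i(\alpha)\to 0$ as $\alpha\to\infty$, which discards the $C_j/\alpha$ and constant branches and forces $a_3=0$, leaving $g_i(\alpha)=a_2 e^{-a_1\alpha^2}$. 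Since $g_i$ is the convolution of $f_{s_i}$ with the everywhere-nonvanishing Gaussian kernel, a Fourier/deconvolution argument ($\hat f_{s_i}=\hat g_i/\hat K$, a ratio of Gaussians) then shows that a Gaussian $g_i$ forces $f_{s_i}$ Gaussian; the symmetric reasoning gives $f_{s_j}$ Gaussian.

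The converse is the easier check. If $\bm{s}_i,\bm{s}_j$ are Gaussian then, because they are unit-variance, $f_{s_i}(x)f_{s_j}(y)$ is rotationally invariant in the $(x,y)$ plane, so the infinitesimal rotation generator $x\partial_y-y\partial_x$ annihilates it; reading Eq.~\ref{eq44} in the form where the generator acts on the product of densities then yields $C_{ik,jl}=0$. Beyond the integration-by-parts step already flagged, the remaining obstacle is rigor in the deconvolution: one must verify $a_1>0$, which is guaranteed since $\bm{s}_i$ has finite (unit) variance, so that dividing the Gaussian $\hat g_i$ by the Gaussian $\hat K$ produces a bona fide, decaying Gaussian $\hat f_{s_i}$ rather than a growing one.
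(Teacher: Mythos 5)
Your proposal follows essentially the same route as the paper's proof: the same reduction from the eigenfunction condition to the kernel condition $\int d\rho_{s_i}(x)d\rho_{s_j}(y)\,(x\partial_y-y\partial_x)K'_i(x,\alpha)K'_j(y,\beta)=0$, the same collapse to $g_i$, $g'_i$, $C_i$ yielding the separable equation $[C_i-\beta g_j(\beta)]g'_i(\alpha)=[C_j-\alpha g_i(\alpha)]g'_j(\beta)$, the same three solution branches, and the same decay/positivity argument forcing $a_3=0$ and a Gaussian $g_i$. Your additions --- the Mercer-expansion justification of the first equivalence, the explicit Fourier deconvolution from $g_i$ to $f_{s_i}$, and the rotational-invariance check of the converse --- merely make explicit steps the paper leaves implicit, and they are sound.
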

Lemma.~\ref{lem14} and~\ref{lem15} supports that the Gaussian kernel is an appropriate choice since at least in the near-independent setting, $R_{\kappa}$ behaves similarly to the mixing coefficients $F$ with such choice, except for Gaussian variables which cannot be separated by ICA inherently \cite{ICAbook2010}. The two proofs are left to Appendix~\ref{pl14} and~\ref{pl15}. Moreover, we define $D_{ik,jl}$ as
\begin{equation}
    D_{ik,jl}^2\equiv\int_{\mathbb{R}^2}d\rho_{s_i}(x)d\rho_{s_j}(y)\big[(x\frac{\partial}{\partial y}-y\frac{\partial}{\partial x})\phi_{ik}(x)\phi_{jl}(y)\big]^2-C_{ik,jl}^2,
\end{equation}
which is $N$ times the variance of estimating $C_{ik,jl}$ using N samples \cite{MonteCarlo1998}. Combining Lemma~\ref{lem14} and~\ref{lem15} gives the value of $\langle \vec{u}_{ik},\vec{u}_{jl}\rangle$. 
\begin{mycor}[Approximation of inner product]
    \label{cor6}
    With the same setting of Lemma~\ref{lem14}, it holds that
    \begin{equation}
        |\langle\vec{u}_{ik},\vec{u}_{jl}\rangle-\epsilon_2F_{ij}C_{ik,jl}|<\delta\epsilon_2F_{ij}\frac{D_{ik,jl}}{\sqrt{N}},
    \end{equation}
    with at least probability $1-1/\delta^2$.
\end{mycor}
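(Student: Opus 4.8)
The plan is to read Corollary~\ref{cor6} as a Chebyshev concentration bound for the Monte Carlo estimator $\langle\vec{u}_{ik},\vec{u}_{jl}\rangle$. Two inputs are required: the expectation of this estimator over the random draw of the $N$ samples, and its variance. The first is delivered verbatim by Lemma~\ref{lem14}, and the second is, by design, what $D_{ik,jl}$ encodes. Given both, the statement is a one-line application of Chebyshev's inequality.

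First I would make the estimator explicit. Through the Nystr\"om identification of Eq.~\ref{eq39}, the normalized eigenvector $\vec{u}_{ik}$ is $\Phi_{ik}/\|\Phi_{ik}\|$, so that $\langle\vec{u}_{ik},\vec{u}_{jl}\rangle=\frac{1}{N}\sum_{m=1}^{N}\phi_{ik}(z_{im})\phi_{jl}(z_{jm})$ is a sample mean over the joint draws $(z_{im},z_{jm})\sim f_{z_i,z_j}$. Its expectation is exactly $M_{z_i,z_j}(\phi_{ik},\phi_{jl})$, which Lemma~\ref{lem14} evaluates to $\epsilon_2 F_{ij}C_{ik,jl}+O(\epsilon^2)$; the zeroth-order contribution vanishes because the eigenfunctions are centered, $\hat{1}_i\phi_{ik}=0$ (Eq.~\ref{eq42}), so the mean is carried entirely by the $\epsilon_2$-order correlation between $z_i$ and $z_j$.

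Next I would fix the variance. Expanding $\bm{z}=(1+\epsilon_2 F)\bm{s}+O(\epsilon^2)$ (Eq.~\ref{eq29.5}) to first order and invoking the integration-by-parts identity of Eq.~\ref{eq44}, the fluctuation of each summand is controlled by $g(x,y)=(x\tfrac{\partial}{\partial y}-y\tfrac{\partial}{\partial x})\phi_{ik}(x)\phi_{jl}(y)$, whose mean under $d\rho_{s_i}\,d\rho_{s_j}$ is $-C_{ik,jl}$ and whose per-sample variance is precisely $D_{ik,jl}^2$ by its definition. Hence the estimator has variance $\epsilon_2^2 F_{ij}^2 D_{ik,jl}^2/N$, i.e.\ standard deviation $\epsilon_2 F_{ij}D_{ik,jl}/\sqrt{N}$. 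Feeding this mean and standard deviation into $\Pr(|Y-\mathbb{E}Y|\ge\delta\,\sigma_Y)\le 1/\delta^2$ with $Y=\langle\vec{u}_{ik},\vec{u}_{jl}\rangle$ yields the claimed bound on the complementary event with probability at least $1-1/\delta^2$.

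The genuine obstacle is the variance step, not the Chebyshev bookkeeping. Read literally, the per-sample variance of the product $\phi_{ik}(z_i)\phi_{jl}(z_j)$ retains an $O(1)$ zeroth-order piece from the (mean-zero but still fluctuating) independent part, which would impose an $O(1/\sqrt{N})$ statistical floor that dominates the claimed $O(\epsilon_2/\sqrt{N})$ scale. Making the corollary rigorous therefore requires either restricting to $\epsilon_2\gtrsim 1/\sqrt{N}$ --- exactly the regime flagged in Section~\ref{sec3.4} where statistical error begins to matter --- or arguing that only the isolated $\epsilon_2$-order signal fluctuation, which $D_{ik,jl}$ is built to measure, is relevant to the quantity ultimately fed into the determinant. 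Pinning down this normalization and regime of validity is where the care lies.
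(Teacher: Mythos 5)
Your proposal matches the paper's own proof, which is exactly the one-liner you describe: Lemma~\ref{lem14} supplies the mean $\epsilon_2F_{ij}C_{ik,jl}$ of the Nystr\"om estimator $\langle\vec{u}_{ik},\vec{u}_{jl}\rangle$, the definition of $D_{ik,jl}^2$ as $N$ times the variance of the Monte Carlo estimate of $C_{ik,jl}$ supplies the deviation scale $\epsilon_2F_{ij}D_{ik,jl}/\sqrt{N}$, and Chebyshev's inequality gives the $1-1/\delta^2$ confidence. The zeroth-order variance subtlety you flag in your closing paragraph --- that the fluctuation of the uncorrelated part of $\phi_{ik}(z_i)\phi_{jl}(z_j)$ imposes an $O(1/\sqrt{N})$ floor unless one adopts precisely the reading that $D_{ik,jl}$ measures only the $\epsilon_2$-order signal fluctuation --- is genuine, but the paper does not resolve it either (its stated proof is, verbatim, ``It is direct by using the Chebyshev inequality''), so you have reproduced the published argument and identified its implicit assumption.
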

It is direct by using the Chebyshev inequality.

\bibliography{ref.bib}
\end{document}